
\documentclass[letterpaper, 10 pt, conference]{ieeeconf}  

\IEEEoverridecommandlockouts                              

\overrideIEEEmargins                                      


\usepackage{graphics} 
\usepackage{epsfig} 
\usepackage{mathptmx} 
\usepackage{times} 
\usepackage{amsmath} 
\usepackage{amssymb}  
\usepackage{algorithm}
\usepackage[noend]{algpseudocode}
\usepackage{comment}
\usepackage{pgfplots}
\usepackage{cite}
\usepackage{dsfont}
\usepackage{xcolor}

\usepackage{tikz}
\usetikzlibrary{shapes.geometric, arrows, positioning, calc}
\definecolor{pinegreen}{rgb}{0.004, 0.475, 0.435}


\newtheorem{definition}{Definition}
\newtheorem{lemma}{Lemma}
\newtheorem{theorem}{Theorem}
\newtheorem{remark}{Remark}

\newtheorem{corollary}{Corollary}

\DeclareMathOperator*{\argmax}{arg\,max}

\newcommand{\qedwhite}{\hfill \ensuremath{\Box}}


\title{\LARGE \bf
Aggregate Fictitious Play for Learning in Anonymous Polymatrix Games (Extended Version)
}

\author{Semih Kara and Tamer Ba\c{s}ar
\thanks{The authors are with the Coordinated Science Laboratory, University of Illinois Urbana-Champaign, Urbana, USA, 61801; Emails: {\tt\small \{semihk, basar1\}@illinois.edu}}%
\thanks{This work was supported in part by the Army Research Office (ARO) Grant W911NF-24-1-0085.}
}


\begin{document}

\maketitle
\thispagestyle{empty}
\pagestyle{empty}

\begin{abstract}

Fictitious play (FP) is a well-studied algorithm that enables agents to learn Nash equilibrium in games with certain reward structures. However, when agents have no prior knowledge of the reward functions, FP faces a major challenge: the joint action space grows exponentially with the number of agents, which slows down reward exploration. Anonymous games offer a structure that mitigates this issue. In these games, the rewards depend only on the actions taken; not on who is taking which action. Under such a structure, we introduce aggregate fictitious play (agg-FP), a variant of FP where each agent tracks the frequency of the \textit{number of other agents} playing each action, rather than these agents' individual actions. We show that in anonymous polymatrix games, agg-FP converges to a Nash equilibrium under the same conditions as classical FP. In essence, by aggregating the agents' actions, we reduce the action space without losing the convergence guarantees. Using simulations, we provide empirical evidence on how this reduction accelerates convergence.

\end{abstract}

\section{Introduction} \label{sec:Intro}

Learning in multi-agent systems is a cornerstone of game theory \cite{fudenberg_levine_th-of-learning-in-games, zhang-et-al_marl-overview, ozdaglar-et-al_independent_learning-stochastic-games}. Among the many learning dynamics, fictitious play (FP) \cite{brown_some-notes-on-comp-of-games-sols} has attracted significant attention due to its simplicity and capacity to lead agents to Nash equilibria in games with certain reward structures \cite{robinson_an-iter-method-for-solving-a-game,monderer_shapley_pot-games,Fict_play_in_nets_Ewerhart_Valkanova}. In classical FP, agents \textit{repeatedly play a fixed matrix game} \cite{osborne-rubinstein_game-theory}, form \emph{beliefs} about other agents’ strategies, and choose actions that \emph{maximize} their expected rewards. 

To implement FP, each agent needs to know its reward function and observe the actions of other agents. Both of these assumptions are relaxed in \cite{Conv_Multi_Timescale_RL_Algos_in_NFGs_Leslie_Collins}, where the authors combine smooth FP with a Q-learning-inspired reward estimation process that runs on a faster timescale. 
In follow-up work, the same authors propose individual Q-learning \cite{Ind_Q_Learning_in_NFGs_Leslie_Collins}, which simplifies the setup to a single timescale. A related two-timescale method is presented in \cite{sayin-et-al_fict-play-in-zero-sum-stoch-games}, where FP is combined with Q-learning. Their setting is \textit{model-free}: agents do not know the reward functions but can observe each others' actions, allowing them to form beliefs. Building on this, \cite{donmez-sayin_ind-q-learning-for-poly-games} studies the effect of action observability. Their empirical results show that access to other agents' actions, combined with belief-formation, can \textit{accelerate learning}.

Despite these advances, a key challenge in belief-based model-free learning for games is \textit{scalability}: as the number of agents increases, the joint action space grows exponentially. Thus, exploring (or estimating) the reward from every possible joint action profile could slow down convergence. 

A solution lies in the structure of the rewards. In this paper, we focus on \emph{anonymous games} \cite{Con_Games_with_Pl_Spec_Milchtaich, blonski_char-of-pure-strat-eq-in-finite-anon-games}, where an agent’s reward depends solely on the actions taken, and not on the identities of the agents taking them. This anonymity enables agents to focus on the aggregate distribution of actions, instead of the full combinatorial explosion of the individual moves. While supporting such simplification, these games are also widely used to model real-world systems, such as congestion in roadway networks \cite{Con_Games_with_Pl_Spec_Milchtaich}, interference in wireless networks \cite{han-et-al_game-theory-in-wireless-and-comm-nets}, and financial markets \cite{ohara_market-microstruct-theory}.

Building on the structure of anonymous games, we introduce \emph{aggregate fictitious play} (agg-FP). Instead of tracking each individual agent's actions, agents using agg-FP track the frequency of the \textit{number of others} playing each action, followed by a best-response to these aggregated beliefs. Overall, our main contributions are as follows:
\begin{itemize}
\item We study the problem of learning Nash equilibria in anonymous games through repeated play. Under the assumption that the game is known, we introduce agg-FP and show that it converges under the same conditions as FP for anonymous polymatrix games \cite{howson_eq-of-poly-games,cai-et-al_zero-sum-poly-generalization-of-minmax}.
\item We extend agg-FP to model-free settings and games with random payoffs by incorporating a reward estimation process that operates on a faster timescale (as in \cite{Conv_Multi_Timescale_RL_Algos_in_NFGs_Leslie_Collins}). For anonymous polymatrix games, we prove that the resulting dynamics converge to a neighborhood of a Nash equilibrium under the same conditions as FP.
\item Through simulations, we empirically demonstrate that two-timescale agg-FP converges faster in model-free settings than existing approaches (i.e. two-timescale FP and individual Q-learning).
\end{itemize}

We note that mean-field games \cite{lauriere-et-al_learning-in-mean-field-games-survey} and population games \cite{kara-martins_diff-eq-approx-for-population-games} address scalability in a similar manner: by tracking the distribution of agents over actions (or states). However, these models assume very large numbers of homogeneous agents. In contrast, anonymous games allow each agent to have a different reward function and support scalability in the mid-regime, where the number of agents is too small for mean-field approximations, yet too large for FP to converge quickly. Also, learning algorithms for mean-field games \cite{lauriere-et-al_learning-in-mean-field-games-survey} generally\footnote{\cite{zaman-et-al_oracle-free-learning-mean-field-games} presents an oracle-free algorithm for mean-field games, yet tracking the empirical frequencies of aggregate actions remains a simpler method.} require an oracle to compute the infinite-horizon distribution of agents. Our approach is more straightforward: we only track the empirical frequencies of aggregate actions.

\noindent\textbf{Preliminary Notation.} Let $\mathbb{N} := \{0,1,\dots\}$. Given a positive $M\in\mathbb{N}$ and a set $B$, we define $[M]:=\{1,\dots,M\}$ and $B^{\otimes M} := B\times\dots\times B$, the $M$-Cartesian product of $B$. Given an \( M \)-dimensional vector \( v \), we often use \( v^i \) to denote its \( i \)-th entry and \( v^{-i} \) to denote all entries except the $i$-th. We denote the indicator function by $\mathds{1}_{\{\cdot\}}$ and a vector of \textit{ones} by $\mathbf{1}$. With a slight abuse of notation, for $y$ in a finite set $\mathbb{Y}$,
\(\mathds{1}\{y\}\) is a vector of zeros with a single \(1\) at the position corresponding to \(y\), when all the elements in $\mathbb{Y}$ are indexed. Finally, $X\sim p$ means that $X$ is a random variable with distribution $p$.

\section{Framework and Problem Description} \label{sec:Framework}

\subsection{$N$-Player Matrix Games}
 
An $N$-player matrix game (game for short) is characterized by a tuple $\mathcal{G}=(N, (\mathbb{A}^i)_{i=1}^N, (r^i)_{i=1}^N)$, where $N$ is the number of players, $\mathbb{A}^i$ is the finite action set of player $i$, and $r^i:\mathbb{A}^1\times\dots\times\mathbb{A}^N\to\mathbb{R}$ is the reward function of $i$. We let $\bar{\mathbb{A}}:= \mathbb{A}^1\times\dots\times \mathbb{A}^N$ and $r:=(r^i)_{i=1}^N$.

Each player \( i \) selects an action \( a^i \) from the set \( \mathbb{A}^i \) simultaneously, and receives a reward \( r^i(a^i, a^{-i}) \) ($a^{-i}$ is the actions of all players except \( i \); see preliminary notation in Section~\ref{sec:Intro}). Players can \textit{independently} randomize their actions by using (mixed) strategies. A strategy $\pi^i$ of player $i$ is a probability distribution over $\mathbb{A}^i$, where $\pi^i(a^i)$ is the probability of player $i$ selecting $a^i$. The expected reward of player $i$ under the strategy profile $\pi=(\pi^1,\dots,\pi^N)$ is 
$$R^i(\pi^i,\pi^{-i}):=\sum_{a\in \bar{\mathbb{A}}} r^i(a^i,a^{-i})\prod_{j=1}^N \pi^j(a^j).$$

The set of $\epsilon$-Nash equilibria $\mathbb{NE}_\epsilon(\mathcal{G})$ of $\mathcal{G}$ is the $\pi_*$ that satisfy the following set of inequalities for each player $i\in[N]$ and any $\pi^i\in\Delta(\mathbb{A}^i)$:
$$R^i(\pi^i_*, \pi^{-i}_*) \geq R^i(\pi^i, \pi^{-i}_*)-\epsilon,$$
where $\Delta(\mathbb{A}^i)$ is the set of all probability distributions on $\mathbb{A}^i$. If the above holds with $\epsilon = 0$, then $\pi_*$ is a Nash equilibrium: no player can improve its expected reward by unilaterally changing its strategy. We use $\mathbb{NE}(\mathcal{G})$ to denote the set of Nash equilibria of $\mathcal{G}$, which is always nonempty \cite{basar-olsder_dyn-noncoop-game-theory}.

A class of games that is of special interest to this paper is \textit{polymatrix} (or \emph{separable network}) games \cite{howson_eq-of-poly-games,cai-et-al_zero-sum-poly-generalization-of-minmax}, in which players' rewards can be expressed as a sum of their ``pairwise interactions.'' Formally, a game is polymatrix if for each $i$, there exist $r^{ij}:\mathbb{A}^i\times \mathbb{A}^j\to\mathbb{R}$, $j\in[N]\setminus\{i\}$, such that
\begin{align*}
    r^i(a^i,a^{-i}) = \sum_{j=1,~j\neq i}^N r^{ij}(a^i,a^j),\quad \text{ for all }a\in\bar{\mathbb{A}}.
\end{align*}

\subsection{Anonymous Games}

In many real-world multi-agent interactions, agents' actions impact each other in a uniform manner\footnote{In the paper, we use ``player'' and ``agent'' interchangeably.}. An example is financial markets \cite{ohara_market-microstruct-theory}, where an investor’s trading decision influences asset prices without targeting specific traders. Another example is network routing \cite{han-et-al_game-theory-in-wireless-and-comm-nets}, where the load on a communication network depends on the collective traffic from all users rather than any particular sender-receiver pair.

This concept is formalized in game theory as anonymous games, which is the primary focus of this paper. Specifically, in these games an agent's reward depends on other agents' actions only through which actions are taken, and not on which agent is playing which action \cite{Con_Games_with_Pl_Spec_Milchtaich,blonski_char-of-pure-strat-eq-in-finite-anon-games,Notions_of_anonymity_Ham,Sym_Plan}. To give a formal definition, we use $\rho$ to denote a permutation of $N-1$ elements.

\begin{definition}\label{def:anon_games}
A game is said to be \textit{anonymous} if $\mathbb{A}^1=\mathbb{A}^2=\dots=\mathbb{A}^N$ and $r^i(a^i, a^{-i}) = r^i(a^i, \rho(a^{-i}))$ for every agent $i$, any action profile $a\in\bar{\mathbb{A}}$, and any permutation $\rho$.
\end{definition}

We let $\mathbb{A}:=\mathbb{A}^1=\dots=\mathbb{A}^N$ and $n:=|\mathbb{A}|$.

\subsection{Games with Random Payoffs}

Random-payoff games generalize matrix games by allowing the rewards to be influenced by an exogenous random variable \cite{charnes-et-al_zero-zero-chance-cons-games,harsanyi_games-w-randomly-dist-payoffs}. Specifically, the reward function of each agent \( i \) depends on a random variable \( \Theta^i \). We denote by \( r^i_{\theta^i}(a^i, a^{-i}) \) the reward that agent \( i \) receives under the action profile \( a \) when the random variable \( \Theta^i \) has realization \( \theta^i \). We assume that each $\Theta^i$ has distribution $\mathcal{P}^i$, takes values in a bounded set $\Omega^i$, has finite mean and variance, and is independent across agents and their actions. 

We define $\Omega:=\Omega^1\times\dots\times\Omega^N$ and $\mathcal{P}:=\mathcal{P}^1\times\dots\times\mathcal{P}^N$, and denote a random-payoff game by $\mathcal{G}_{\mathcal{P}} := (N,(\mathbb{A}^i)_{i=1}^N,(r^i_{\Theta^i})_{i=1}^N,(\Omega,\mathcal{P}))$. We also write $r_\Theta := (r^i_{\Theta^i})_{i=1}^N$. The game $\mathcal{G}_{\mathcal{P}}$ is anonymous or polymatrix if $(N,(\mathbb{A}^i)_{i=1}^N,(r^i_{\theta^i})_{i=1}^N)$ is anonymous or polymatrix, respectively, for any realization $\theta:=(\theta^1,\dots,\theta^N)$ of $\Theta:=(\Theta^1,\dots,\Theta^N)$.

A typical objective is for each agent to maximize its expected reward \cite{harsanyi_games-w-randomly-dist-payoffs}. Accordingly, the set of $\epsilon$-Nash equilibria $\mathbb{NE}_\epsilon(\mathcal{G}_{\mathcal{P}})$ of $\mathcal{G}_{\mathcal{P}}$ is the $\pi_*$ that satisfy the following set of inequalities for each agent $i\in[N]$ and any $\pi^i\in\Delta(\mathbb{A}^i)$:
\begin{align*}
\mathbb{E}[R^i_{\Theta^i}(\pi^i_*,\pi^{-i}_*)] \geq \mathbb{E}[R^i_{\Theta^i}(\pi^i,\pi^{-i}_*)]- \epsilon,
\end{align*}
where the expectation is taken over $\Theta^i$.

\subsection{Learning in Repeated Games and {Problem Statement}}

We study the problem of learning a Nash equilibrium of an anonymous game, possibly with random-payoffs, through repeated play. In this setting---known as \textit{learning in (repeated) games} \cite{fudenberg_levine_th-of-learning-in-games}---agents engage through a fixed game over successive \textit{stages}, where at each stage (or timestep) they select actions, receive rewards, and update their strategies based on past experiences. We denote the action taken by agent \( i \) at stage \( k \) by \( a^i_k \), and assume that each agent can observe other agents' actions, but only receive information about its own reward.

When the stage game has random payoffs, we assume that the randomness at each stage \( k \) is independently and identically distributed as \( \Theta_k \sim \mathcal{P} \). As a result, the stage reward at each timestep $k$ is given by $r_{\Theta_k}$, where $\Theta_k\sim\mathcal{P}$.

We consider two scenarios: (i) The stage game is an anonymous matrix game and each agent $i$ knows $r^i$; (ii) The stage game is a random-payoff anonymous game, but neither the reward functions nor the distribution $\mathcal{P}$ are known to the agents. Our results from the model-based case in (i) will be the precursor for our results in the model-free setting in (ii).

\textit{Problem Statement:} We start by formalizing that, in anonymous games, the dependence of an agent's reward on other agents' actions can be ``succinctly represented'' as a function of their aggregate actions. Our goal is to develop a learning dynamics that exploits this structure to speed up learning an $\epsilon$-Nash equilibrium in repeated play.

\tikzstyle{agent} = [rectangle, rounded corners, minimum width=2cm, minimum height=1cm, text centered, draw=blue, fill=blue!10]
\tikzstyle{environment} = [rectangle, rounded corners, minimum width=3cm, minimum height=1.5cm, text centered, draw=black, fill=gray!10, align=center]
\tikzstyle{arrow} = [thick,->,>=stealth]
\begin{center}
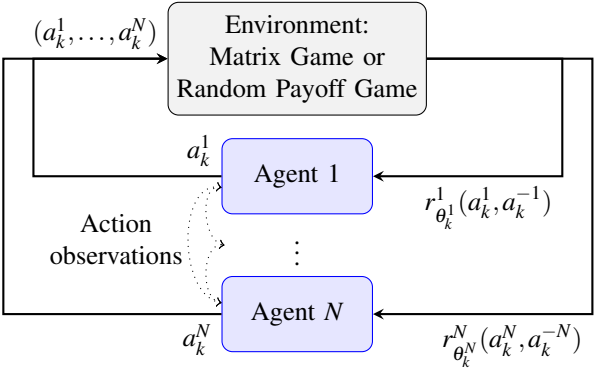
\begin{figure}
\begin{tikzpicture}

    \node (env) [environment] {Environment: \\ Matrix Game or\\ Random Payoff Game};

    \node (agent1) [agent, below=0.3cm of env] {Agent 1};
    \node (dots) [below=0.02cm of agent1] {\vdots};
    \node (agentN) [agent, below=0.02cm of dots] {Agent $N$};

    \draw [arrow] (agent1.west) node[above left] {$a^1_k$} -- ++(-2.5,0) |-  (env.west);
    \draw [arrow] (agentN.west) node[below left] {$a^N_k$} -- ++(-2.9,0) |- (env.west) node[above left] {$(a^1_k,\dots,a^N_k)$} ;

    \draw [arrow] (env.east) -- ++(1.8,0) |- node[below left] {$r^1_{\theta^1_k}(a^1_k,a^{-1}_k)$} (agent1.east);
    \draw [arrow] (env.east) -- ++(2.2,0) |- node[below left] {$r^N_{\theta^N_k}(a^N_k,a^{-N}_k)$} (agentN.east);

    \draw [dotted, <->] ([yshift=-0.15cm]agent1.west) to[out=170, in=170] ([xshift=-0.8cm]dots.west);
    \draw [dotted, <->] ([yshift=0.15cm]agentN.west) to[out=170, in=170] ([xshift=-0.8cm]dots.west);
    \draw [dotted, <->] ([yshift=-0.15cm]agent1.west) to[out=170, in=170] node[xshift=0.05cm, left, align=center] {Action\\observations} ([yshift=0.15cm]agentN.west);

\end{tikzpicture}
\caption{Illustration of learning in repeated games.}
\label{fig:env}
\end{figure}
\end{center}

\section{Succinct Representation and Anonymity} \label{sec:Anon_Games}

Let us define $\sigma:\mathbb{A}^{\otimes(N-1)}\to \mathbb{R}^n$ as 
$$\sigma(a^{-i}) := \sum_{j=1,~j\neq i}^N \mathds{1}\{a^j\}.$$ To simplify notation, we will also use $x^{-i} := \sigma(a^{-i})$. Thus, $x^{-i}$ is the vector containing the \textit{counts of agents, excluding agent $i$, playing each action}, taking values in $\mathbb{X}:=\{\xi\in\mathbb{N}^n ~:~ \sum_{i=1}^n\xi_i=N-1 \}.$

Several papers define anonymous games using Definition~\ref{def:anon_games} \cite{Sym_Plan,Notions_of_anonymity_Ham}, whereas others define them by requiring the reward functions to have the form $r^i(a^i, x^{-i})$ \cite{Con_Games_with_Pl_Spec_Milchtaich,Comp_Eq_in_Anon_Games_Daskalakis_Papadimitrou,Symm_and_the_Comp_of_Pure_NE_Brandt_et_al}. Although both definitions are widely used, it appears that a complete proof of their equivalence is not available. While \cite{Symm_and_the_Comp_of_Pure_NE_Brandt_et_al} offers a justification for this equivalence, it does not provide a comprehensive proof. We formalize their explanation below.

\begin{lemma}\label{lem:agg_rep}
A game $\mathcal{G}$ is anonymous if and only if, for each agent $i\in[N]$, there is a function $\bar{r}^i:\mathbb{A}\times \mathbb{X}\to\mathbb{R}$ such that $r^i(a^i, a^{-i}) = \bar{r}^i(a^i, x^{-i})$  for all $a\in\mathbb{A}^{\otimes N}$.
\end{lemma}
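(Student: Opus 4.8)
The plan is to prove the two implications separately, with the reverse direction being the straightforward one.

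For the ($\Leftarrow$) direction, suppose each $r^i$ factors as $r^i(a^i, a^{-i}) = \bar{r}^i(a^i, x^{-i})$ with $x^{-i} = \sigma(a^{-i})$. The key observation is that $\sigma$ is invariant under permutations of its argument: since $\sigma(a^{-i}) = \sum_{j\neq i}\mathds{1}\{a^j\}$ is a sum of indicator vectors and addition is commutative, reordering the entries of $a^{-i}$ leaves the sum unchanged, i.e. $\sigma(\rho(a^{-i})) = \sigma(a^{-i})$ for every permutation $\rho$. Composing with $\bar{r}^i$ then gives $r^i(a^i,\rho(a^{-i})) = \bar{r}^i(a^i,\sigma(\rho(a^{-i}))) = \bar{r}^i(a^i,\sigma(a^{-i})) = r^i(a^i,a^{-i})$, which is exactly the condition in Definition~\ref{def:anon_games}.

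For the ($\Rightarrow$) direction, I would construct $\bar{r}^i$ explicitly by choosing, for each count vector $x\in\mathbb{X}$, a representative profile $a^{-i}_x$ with $\sigma(a^{-i}_x)=x$ (such a profile exists because $\sigma$ maps onto $\mathbb{X}$: any nonnegative integer vector summing to $N-1$ is realized by listing $x_\ell$ copies of the $\ell$-th action), and setting $\bar{r}^i(a^i,x):=r^i(a^i,a^{-i}_x)$. The desired identity $r^i(a^i,a^{-i})=\bar{r}^i(a^i,\sigma(a^{-i}))$ then follows provided the construction does not depend on the chosen representative.

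The main obstacle is precisely this well-definedness, which rests on the combinatorial fact that two profiles $a^{-i}$ and $b^{-i}$ satisfy $\sigma(a^{-i})=\sigma(b^{-i})$ if and only if $b^{-i}$ is a permutation of $a^{-i}$. The ``if'' part is the permutation invariance already noted; for the ``only if'' part, equal count vectors mean each action is played by the same number of agents in both profiles, so the two profiles induce identical multisets of actions and hence one is a reindexing of the other, i.e. $b^{-i}=\rho(a^{-i})$ for some $\rho$. Consequently, any profile $a^{-i}$ with $\sigma(a^{-i})=x$ satisfies $a^{-i}=\rho(a^{-i}_x)$ for some $\rho$, and anonymity yields $r^i(a^i,a^{-i}) = r^i(a^i,\rho(a^{-i}_x)) = r^i(a^i,a^{-i}_x) = \bar{r}^i(a^i,x) = \bar{r}^i(a^i,\sigma(a^{-i}))$, completing the argument. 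I expect the bookkeeping around this permutation-versus-count correspondence to be the only subtle point; everything else is a direct unwinding of the definitions.
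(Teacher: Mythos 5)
Your proposal is correct and follows essentially the same route as the paper: the easy direction via permutation-invariance of $\sigma$, and the converse via the combinatorial fact that $\sigma(a^{-i})=\sigma(b^{-i})$ if and only if the two profiles are permutations of one another. The paper packages this as a bijection between the quotient $\mathbb{A}^{\otimes(N-1)}/\!\approx$ and $\mathbb{X}$ (proving the key fact by a remove-common-actions contradiction), whereas you choose representatives and check well-definedness via the multiset argument --- the same construction in slightly different clothing.
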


\begin{proof}
The if direction is straightforward: a reward function of the form $\bar{r}^i(a^i, x^{-i})$ implies that permuting other agents' actions does not affect the reward, as it leaves $x^{-i}$ unchanged.

We proceed with the only if direction. Observe that the set of $(N-1)$-permutations induces an equivalence relation $\approx$ on $\mathbb{A}^{\otimes(N-1)}$. That is, we set $a^{-i}\in\mathbb{A}^{\otimes(N-1)}$ to be equivalent to $b^{-i}\in\mathbb{A}^{\otimes(N-1)}$ if $a^{-i}$ is a permutation of $b^{-i}$. Let us denote the corresponding quotient space by $\mathbb{A}^{\otimes(N-1)}/\approx $.

Since the game is anonymous, we have $r^i(a^i, a^{-i}) = r^i(a^i, \rho(a^{-i}))$ for every $a\in\mathbb{A}^{\otimes N}$ and permutation $\rho$. Thus, $r^i(a^i, \cdot)$ is constant within each equivalence class of $\approx $. This results in the function $\tilde{r}^i$ given by $\tilde{r}^i(a^i, [a^{-i}]_{\approx }) := r^i(a^i, a^{-i})$ to be well-defined for all $a^i\in\mathbb{A}$ and  $[a^{-i}]_{\approx}\in\mathbb{A}^{\otimes(N-1)}/\approx$.

Now, for any \(a^{-i}\) and \(b^{-i}\) in \(\mathbb{A}^{\otimes(N-1)}\), let \(x^{-i} = \sigma(a^{-i})\) and \(y^{-i} = \sigma(b^{-i})\). It is clear that if \(a^{-i}\) is a permutation of \(b^{-i}\), then \(x^{-i} = y^{-i}\). Conversely, assume that \(x^{-i} = y^{-i}\). We claim that \(a^{-i}\) must then be a permutation of \(b^{-i}\). To see this, suppose, by contradiction, that \(a^{-i}\) is not a permutation of \(b^{-i}\). Remove one by one every action that appears in both \(a^{-i}\) and \(b^{-i}\), until no common actions remain; denote the remaining lists by \(\tilde{a}^{-i}\) and \(\tilde{b}^{-i}\). Since \(a^{-i}\) is not a permutation of \(b^{-i}\), some actions remain unmatched, and thus both \(\tilde{a}^{-i}\) and \(\tilde{b}^{-i}\) are nonempty and have no common elements (otherwise those elements would have been  removed). Therefore,  $\sum_{l=1}^{|\tilde{a}^{-i}|}\mathds{1}\{\tilde{a}^{l}\} \neq \sum_{l=1}^{|\tilde{b}^{-i}|}\mathds{1}\{\tilde{b}^{l}\}$. Since the removed common actions contribute equally to both $x^{-i}$ and $y^{-i}$, this inequality implies \(x^{-i} \neq y^{-i}\), contradicting our assumption. Hence, \(a^{-i}\) must be a permutation of \(b^{-i}\). 

As a result, $x^{-i} = y^{-i}$ if and only if $a^{-i}$ is a permutation of $b^{-i}$, which establishes a bijection between $\mathbb{A}^{\otimes(N-1)}/\approx $ and $\mathbb{X}$. Thus, we can characterize each element $[a^{-i}]_{\approx }$ of $\mathbb{A}^{\otimes(N-1)}/\approx $ with a \textit{single element} $x^{-i}\in \mathbb{X}$, allowing us to define a function $\bar{r}^i$ on $\mathbb{A}\times \mathbb{X}$ as
\begin{align*}
\bar{r}^i(a^i, x^{-i}) := \tilde{r}^i(a^i, [a^{-i}]_{\approx }) = r(a^i, a^{-i})
\end{align*}
for any $a^i\in\mathbb{A}$ and $x^{-i}\in \mathbb{X}$.
\end{proof}

\begin{remark}\label{rem:complexity} 
Importantly, $\bar{r}^i$ is defined on $\mathbb{A} \times \mathbb{X}$, which can assume a total of  
$$
|\mathbb{A}\times \mathbb{X}| = n {N+n-2 \choose n-1}
$$  
distinct values, compared to the $n^N$ values required for $r^i$. For example, with 3 actions and 5 agents, representing the game using $\bar{r}^i$ reduces the number of distinct reward entries from 243 to 45. \qedwhite
\end{remark}

\begin{remark}
Games where rewards can be encoded compactly, such as anonymous games, are referred to as being succinctly representable. We refer to $\bar{r}^i$ as the \textit{succinct representation} of $r^i$. \qedwhite
\end{remark}

Now, consider the situation where each agent $i$ selects its action using a strategy $\pi^i$. Hence, we have a stochastic action profile $A:=(A^1,\dots,A^N)$, taking values in $\mathbb{A}^{\otimes N}$, where $A^i\sim\pi^i$. Let $X^{-i}:=\sigma(A^{-i})$. We denote the distribution of $X^{-i}$ (resulting from $\pi^{-i}$) by $\mu^i$:
\begin{align*}
\mu^i(x^{-i}) := \mathbb{P}(X^{-i}=x^{-i}),\quad \text{ for all }x^{-i}\in \mathbb{X}.
\end{align*}
The representation in Lemma~\ref{lem:agg_rep} extends to mixed strategies under $(\mu^1,\dots,\mu^N)$.

\begin{lemma}\label{lem:agg_rep_mixed}
The succinct representation $\bar{r}^i$ of $r^i$ satisfies
\begin{align*}
R^i(\pi^i, \pi^{-i}) = \bar{R}^i(\pi^i, \mu^i), \quad\text{for all }\pi\in (\Delta(\mathbb{A}))^N.
\end{align*}
\end{lemma}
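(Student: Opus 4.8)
The plan is to unwind the definition of the expected reward, substitute the succinct representation from Lemma~\ref{lem:agg_rep}, and then reorganize the sum over joint action profiles by grouping them according to their aggregate. Throughout, I take the expected reward under the succinct representation to be $\bar{R}^i(\pi^i,\mu^i):=\sum_{a^i\in\mathbb{A}}\sum_{x^{-i}\in\mathbb{X}}\bar{r}^i(a^i,x^{-i})\,\pi^i(a^i)\,\mu^i(x^{-i})$, which is the natural expectation of $\bar{r}^i$ when agent $i$ plays $\pi^i$ and the aggregate of the others is distributed as $\mu^i$.

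First I would factor out agent $i$'s own randomization and write $R^i(\pi^i,\pi^{-i})=\sum_{a^i\in\mathbb{A}}\pi^i(a^i)\sum_{a^{-i}\in\mathbb{A}^{\otimes(N-1)}}r^i(a^i,a^{-i})\prod_{j\neq i}\pi^j(a^j)$, and then invoke Lemma~\ref{lem:agg_rep} to replace $r^i(a^i,a^{-i})$ by $\bar{r}^i(a^i,\sigma(a^{-i}))$. The key manipulation is to partition $\mathbb{A}^{\otimes(N-1)}$ into the level sets $\{a^{-i}:\sigma(a^{-i})=x^{-i}\}$ indexed by $x^{-i}\in\mathbb{X}$. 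Because $\bar{r}^i(a^i,x^{-i})$ is constant on each such level set, I can pull it outside the inner summation, leaving the inner sum $\sum_{a^{-i}:\,\sigma(a^{-i})=x^{-i}}\prod_{j\neq i}\pi^j(a^j)$ weighted by $\bar{r}^i(a^i,x^{-i})$.

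The crux is to recognize that inner sum as $\mu^i(x^{-i})$. Since the agents randomize independently, $\prod_{j\neq i}\pi^j(a^j)=\mathbb{P}(A^{-i}=a^{-i})$, and the events $\{A^{-i}=a^{-i}\}$ over all $a^{-i}$ in a fixed level set are disjoint and partition the event $\{X^{-i}=x^{-i}\}$; by the law of total probability the inner sum equals $\mathbb{P}(X^{-i}=x^{-i})=\mu^i(x^{-i})$. Substituting this back collapses the expression to $\sum_{a^i}\pi^i(a^i)\sum_{x^{-i}}\bar{r}^i(a^i,x^{-i})\mu^i(x^{-i})=\bar{R}^i(\pi^i,\mu^i)$, as claimed. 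I expect the only genuinely substantive point to be this regrouping into level sets of $\sigma$ together with the probabilistic reinterpretation of the inner sum; it rests on the fact, established in the proof of Lemma~\ref{lem:agg_rep}, that $\sigma$ is constant exactly on permutation-equivalence classes, which is precisely what legitimizes pulling out $\bar{r}^i$ and makes the grouping well-defined. The remaining steps are routine bookkeeping.
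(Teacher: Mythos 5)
Your proposal is correct and is essentially the paper's own proof run in the opposite direction: the paper starts from $\bar{R}^i(\pi^i,\mu^i)$ and expands $\mu^i(x^{-i})$ over the level sets $\{a^{-i}:\sigma(a^{-i})=x^{-i}\}$ before merging the two sums into one over $\mathbb{A}^{\otimes(N-1)}$, while you start from $R^i(\pi^i,\pi^{-i})$ and group by those same level sets. The key step in both is identical — identifying $\mu^i(x^{-i})$ with $\sum_{a^{-i}:\sigma(a^{-i})=x^{-i}}\prod_{j\neq i}\pi^j(a^j)$ — so no further comparison is needed.
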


\begin{proof}
For any $\pi\in(\Delta(\mathbb{A}))^N$, we have
\begin{align*}
& \bar{R}^i(\pi^i, \mu^i)\\
&= \sum_{a^i\in\mathbb{A}}\sum_{x^{-i}\in \mathbb{X}} \bar{r}^i(a^i, x^{-i})\pi^i(a^i)\mu^i(x^{-i})\\
&= \sum_{a^i\in\mathbb{A}}\sum_{x^{-i}\in \mathbb{X}}\sum_{\{a^{-i}~:~\sigma(a^{-i})=x^{-i}\}} r^i(a^i, a^{-i})\pi^i(a^i)\mathbb{P}(A^{-i}=a^{-i}).
\end{align*}
The combined summation over \( x^{-i} \in  \mathbb{X} \) and \( \{a^{-i} ~:~ \sigma(a^{-i}) = x^{-i} \} \) is equivalent to summing over \( a^{-i} \) across the entire set $\mathbb{A}^{\otimes(N-1)}$. Thus, $\bar{R}^i(\pi^i, \mu^i) = R^i(\pi^i, \pi^{-i})$.
\end{proof}

\section{Aggregate Fictitious Play} \label{sec:Rep_Anon_Games_and_Agg_FP}

We have so far formalized the succinct representation of anonymous games. Now, we turn to the problem of learning their Nash equilibria through repeated interactions---a setting we briefly introduced in Section~\ref{sec:Framework}. Specifically, we assume that $\mathcal{G}$ is an anonymous game, played repeatedly, and that each agent $i$ knows $r^i$. In this context, we will introduce aggregate fictitious play (agg-FP): a new learning dynamics that modifies traditional FP by leveraging anonymity.

\subsection{Aggregate Fictitious Play in Repeated Games} \label{subsec:Agg_FP}

\subsubsection{Background on FP}

Introduced by \cite{brown_some-notes-on-comp-of-games-sols}, FP is a learning process in which each agent estimates other agents' strategies using the weighted average of past actions and play best responses to these estimates. In particular, at each timestep $k\geq 1$, agent $i$ updates its estimate of agent $j$'s strategy using
\begin{align}
\hat{\pi}^j_{k} = \hat{\pi}^j_{k-1} + \alpha_k (\mathds{1}\{a^j_{k}\} - \hat{\pi}^j_{k-1}), \label{eq:pi_hat} 
\end{align}
where $a^j_k$ is the action taken by $j$ at time $k$, and $(\alpha_k)_{k\geq 1}$ is a step-size sequence that satisfies the Robbins-Monro
conditions
\begin{align}
\sum_{k\geq 1}\alpha_k = \infty,\quad \sum_{k\geq 1}\alpha_k^2 < \infty. \label{eq:stepsize_conds}
\end{align}
Assuming that other agents are playing $\hat{\pi}^{-i}_k$ as stationary strategies, agent $i$ then plays a best response:  
\begin{align*}
a^i_{k+1} \in \arg\max_{a^i \in \mathbb{A}} \{R^i(a^i, \hat{\pi}^{-i}_k)\}.
\end{align*}
The resulting $(\hat{\pi}_k)_{k\geq 0}$, which is commonly referred to as the \textit{belief} \cite{fudenberg_levine_th-of-learning-in-games,ozdaglar-et-al_independent_learning-stochastic-games}, has been shown to converge to $\mathbb{NE}(\mathcal{G})$ for zero-sum polymatrix games \cite{robinson_an-iter-method-for-solving-a-game, Fict_play_in_nets_Ewerhart_Valkanova} and potential games \cite{monderer_shapley_pot-games} (see \cite{ozdaglar-et-al_independent_learning-stochastic-games} for additional types of games where FP converges).

\begin{remark} \label{rem:FP_init}
Assume arbitrary initial actions $a_0$ and define $\hat{\pi}_0 = \mathds{1}\{a_0\}$. Then, Eq. \eqref{eq:pi_hat} specifies $\hat{\pi}_k$ for all $k\geq 1$. \qedwhite
\end{remark}

\begin{remark}
Throughout the paper, we adopt an arbitrary deterministic tie-breaking rule to select an action when $\argmax$ yields multiple maximizers. One such rule is selecting the maximizer with the smallest index. \qedwhite
\end{remark}

\subsubsection{Aggregate Fictitious Play}

We can interpret $\hat{\pi}^{-i}_k$ as how agent $i$ models other agents at time $k$, which constitutes modeling each of them individually. 

Building on the ideas in Section~\ref{sec:Anon_Games}, we propose an aggregative approach to modeling other agents' behavior. For any given action history $(a_k)_{k\geq 0}$, we define $(\hat{\mu}_k)_{k\geq 0}$ as
\begin{align}
&\hat{\mu}^i_{k} = \hat{\mu}^i_{k-1} + \alpha_{k}(\mathds{1}\{x^{-i}_{k}\}-\hat{\mu}^i_{k-1}) \label{eq:mu_hat}
\end{align}
for each agent $i\in[N]$ and stage $k\geq 1$, where $x^{-i}_k:=\sigma(a^{-i}_k)$. Hence, $(\hat{\mu}_k)_{k \geq 0}$ is the weighted empirical frequencies of \textit{aggregate actions}. Accordingly, we refer to $(\hat{\mu}_k)_{k \geq 0}$ as the \textit{aggregate belief}.

\begin{remark}
In line with Remark~\ref{rem:FP_init}, we set $\hat{\mu}^i_0 = \mathds{1}\{x^{-i}_0\}$ for all $i$. Then, Eq. \eqref{eq:mu_hat} specifies $\hat{\mu}_k$ for all $k\geq 1.$ \qedwhite
\end{remark}

\begin{remark}\label{rem:different_beliefs}
Both $\hat{\mu}^i_k$ and $\hat{\pi}^{-i}_k$ define probability distributions on $\mathbb{X}$, which can be used as beliefs for $x^{-i}_{k+1}$. Nonetheless, these distributions are \textit{not the same in general}. In other words, assuming $A^j\sim\hat{\pi}^j_{k}$ for all $j\in[N]$, there may exist $(a_l)_{0\leq l\leq k}$ and $x^{-i}\in\mathbb{X}$ such that
\begin{align}
\mathbb{P}(\sigma(A^{-i}) = x^{-i}) \neq \hat{\mu}^i_{k}(x^{-i}). \label{eq:uneq_agg_op_strats}
\end{align}
Therefore, $\hat{\mu}_k$ and $\hat{\pi}_k$ can yield different beliefs. \qedwhite
\end{remark}

Remark~\ref{rem:different_beliefs} shows that, in general, the aggregative and individualistic approaches produce different beliefs. Yet, the following lemma reveals an important connection: they yield the \textit{same expected rewards in polymatrix games}.

\begin{lemma}\label{lem:equal_exp_rews}
Assume that $\mathcal{G}$ is anonymous and polymatrix. Then, for each stage $k\geq 0$, agent $i\in[N]$, and any history of joint actions $(a_l)_{0\leq l\leq k}$, it holds that
\begin{align}
\bar{R}^i(\cdot,\hat{\mu}^{i}_k) = R^i(\cdot,\hat{\pi}^{-i}_k). \label{eq:equal_exp_payoffs}
\end{align}
\end{lemma}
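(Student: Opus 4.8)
The plan is to prove the identity by induction on the stage $k$, exploiting the fact that in a polymatrix game the expected reward $R^i(a^i, \cdot)$ decomposes into a sum of terms, each depending on only a single opponent's marginal strategy. I would fix an agent $i$ and an arbitrary action $a^i \in \mathbb{A}$ (the slot marked $\cdot$), and show that both sides obey the \emph{same} affine recursion driven by the realized stage reward $r^i(a^i, a^{-i}_k)$. For the base case $k = 0$, both beliefs are point masses: $\hat{\mu}^i_0 = \mathds{1}\{x^{-i}_0\}$ and $\hat{\pi}^j_0 = \mathds{1}\{a^j_0\}$ for every $j \neq i$. Then $\bar{R}^i(a^i, \hat{\mu}^i_0) = \bar{r}^i(a^i, x^{-i}_0)$, while $R^i(a^i, \hat{\pi}^{-i}_0) = r^i(a^i, a^{-i}_0)$ since the product of point masses concentrates on $a^{-i}_0$; these agree by Lemma~\ref{lem:agg_rep}, which requires only anonymity.

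For the inductive step I would first handle the aggregate side. Since $\bar{R}^i(a^i, \cdot)$ is linear in its second argument and $\hat{\mu}^i_k = (1-\alpha_k)\hat{\mu}^i_{k-1} + \alpha_k \mathds{1}\{x^{-i}_k\}$, linearity together with Lemma~\ref{lem:agg_rep} gives
\begin{align*}
\bar{R}^i(a^i, \hat{\mu}^i_k) = (1-\alpha_k)\bar{R}^i(a^i, \hat{\mu}^i_{k-1}) + \alpha_k\, r^i(a^i, a^{-i}_k).
\end{align*}
The crux is to derive the identical recursion for the individual side, and here the polymatrix hypothesis is essential: writing $R^i(a^i, \hat{\pi}^{-i}_k) = \sum_{j\neq i}\mathbb{E}_{A^j\sim\hat{\pi}^j_k}[r^{ij}(a^i, A^j)]$, each summand is linear in the single marginal $\hat{\pi}^j_k$. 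Applying the FP update $\hat{\pi}^j_k = (1-\alpha_k)\hat{\pi}^j_{k-1} + \alpha_k\mathds{1}\{a^j_k\}$ termwise and summing over $j$ yields
\begin{align*}
R^i(a^i, \hat{\pi}^{-i}_k) = (1-\alpha_k)R^i(a^i, \hat{\pi}^{-i}_{k-1}) + \alpha_k\sum_{j\neq i} r^{ij}(a^i, a^j_k),
\end{align*}
and the final sum equals $r^i(a^i, a^{-i}_k)$ by the polymatrix decomposition. Since the two recursions share the same driving term while their leading terms coincide by the induction hypothesis, the claim follows.

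The main obstacle is exactly the point flagged in Remark~\ref{rem:different_beliefs}: the individual belief enters $R^i$ through the \emph{product} $\prod_{j\neq i}\hat{\pi}^j_k(a^j)$, so $R^i(a^i, \cdot)$ is generally not affine under the simultaneous one-step update of all marginals, and the aggregate and individual beliefs genuinely differ as distributions on $\mathbb{X}$. What rescues the identity is that polymatrix rewards are additively separable across opponents, so the expected reward depends on the marginals only---never on their joint coupling---and the product structure never produces cross terms. I would therefore state carefully where anonymity is invoked (to pass to the succinct representation $\bar{r}^i$ via Lemma~\ref{lem:agg_rep}) versus where the polymatrix structure is invoked (to linearize $R^i$ in each marginal and collapse $\sum_{j\neq i} r^{ij}(a^i, a^j_k)$ back into $r^i(a^i, a^{-i}_k)$). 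An equivalent, non-inductive route would be to unroll both recursions into weighted averages over the action history with a common weight sequence $\beta_{l,k}$ and match the two expressions term by term, but the induction keeps the role of each hypothesis most transparent.
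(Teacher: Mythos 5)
Your proof is correct. The mechanism is the same one the paper relies on---the polymatrix decomposition makes $R^i(a^i,\cdot)$ additively separable across opponents and hence linear in each marginal $\hat{\pi}^j_k$ individually, which is what prevents the product $\prod_{j\neq i}\hat{\pi}^j_k(a^j)$ from generating cross terms---but you package it differently. The paper unrolls both updates explicitly into weighted averages $\hat{\pi}^j_k=\sum_{l=0}^k\bar{\alpha}_l\mathds{1}\{a^j_l\}$ and $\hat{\mu}^i_k=\sum_{l=0}^k\bar{\alpha}_l\mathds{1}\{x^{-i}_l\}$ with a common weight sequence, writes the expected reward as a double sum over time indices and agents, swaps the order of summation, and collects the per-agent terms into the aggregate representation; this is precisely the ``non-inductive route'' you mention at the end. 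Your induction instead shows that both sides satisfy the identical affine recursion with driving term $r^i(a^i,a^{-i}_k)$ and agree at $k=0$, which avoids carrying the explicit weights $\bar{\alpha}_l$ and makes the division of labor cleaner: Lemma~\ref{lem:agg_rep} (anonymity) is invoked only to identify the two driving terms $\bar{r}^i(a^i,x^{-i}_k)=r^i(a^i,a^{-i}_k)$, while the polymatrix hypothesis is invoked only to linearize the individual-belief side. Both arguments are valid; yours is arguably tidier, and your closing observation that the recursions could equivalently be unrolled and matched term by term is exactly the paper's proof.
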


\begin{proof}
Note that \eqref{eq:equal_exp_payoffs} holds trivially for $k=0$. As for $k\geq 1$, for any $i,j\in[N]$, we have $\hat{\pi}^j_k = \sum_{l=0}^k\bar{\alpha}_l \mathds{1}\{a_l^j\}$ and $\hat{\mu}^i_k = \sum_{l=0}^k\bar{\alpha}_l \mathds{1}\{x^{-i}_l\}$, where $\bar{\alpha}_k := \alpha_k$ and $\bar{\alpha}_l := \alpha_k\prod_{m=l+1}^{k}(1-\alpha_m)$ for all $l\in\{0,\dots,k-1\}$. As a result,
\begin{align*}
R^i(a^i,\hat{\pi}^{-i}_k) &= \sum_{a^{-i}\in\mathbb{A}^{\otimes(N-1)}} \sum_{j\neq i} r^i(a^i,a^j)\sum_{l=0}^k\bar{\alpha}_l \mathds{1}_{\{a^j=a_l^j\}}\\
&= \sum_{l=0}^k \bar{\alpha}_l \sum_{a^{-i}\in\mathbb{A}^{\otimes(N-1)}} \sum_{j\neq i} r^i(a^i,a^j) \mathds{1}_{\{a^j=a_l^j\}}\\
&= \sum_{l=0}^k \bar{\alpha}_l \sum_{x^{-i}\in\mathbb{X}} \bar{r}^i(a^i,x^{-i}) \mathds{1}_{\{x^{-i}=\sigma(a_l^{-i})\}}\\
&= \sum_{x^{-i}\in\mathbb{X}} \bar{r}^i(a^i,x^{-i}) \sum_{l=0}^k \bar{\alpha}_l \mathds{1}_{\{x^{-i}=\sigma(a_l^{-i})\}}\\
&= \bar{R}^i(a^i,\hat{\mu}^i_k).
\end{align*}
\end{proof}

\begin{remark}We emphasize that the equivalence of expected rewards, stated in Lemma~\ref{lem:equal_exp_rews}, is specific to polymatrix games. The polymatrix structure of the game allows the expected reward to be written as a double sum---first over time indices \( l = 0, \dots, k \), and second over agents \( j \in \{1, \dots, N\} \setminus \{i\} \). This structure enables us to swap the order of summation and then collect the terms over agents, resulting in an aggregative representation. In contrast, for non-polymatrix games, this reordering is generally invalid, leading to different expected rewards under individual versus aggregate beliefs.\qedwhite\end{remark}

A direct yet important consequence of Lemma~\ref{lem:equal_exp_rews} is the corollary below.

\begin{corollary}\label{cor:same_actions}
If $\mathcal{G}$ is polymatrix and anonymous, then $\bar{R}^i(\cdot,\hat{\mu}^{i}_k)$ and $R^i(\cdot,\hat{\pi}^{-i}_k)$ are best-response equivalent ($i$'s best response is the same regardless of which reward vector it uses). More generally, $(\hat{\mu}_l)_{0\leq l\leq k}$ and $(\hat{\pi}_l)_{0\leq l\leq k}$ generate the same $a_{k+1}$ under any deterministic decision mechanism that maps expected rewards to actions.
\end{corollary}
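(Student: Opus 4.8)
The plan is to derive the corollary as an essentially immediate consequence of Lemma~\ref{lem:equal_exp_rews}, which already establishes the pointwise identity of the two expected-reward vectors. The only work left is to translate ``equal reward vectors'' into ``equal selected actions,'' being careful about the finite domain, ties, and the fact that the second claim concerns the entire joint action $a_{k+1}$ rather than a single agent's choice.

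First, for the best-response equivalence, I would fix an agent $i$ and a history $(a_l)_{0\le l\le k}$, and invoke Lemma~\ref{lem:equal_exp_rews} to obtain $\bar{R}^i(a^i,\hat{\mu}^i_k)=R^i(a^i,\hat{\pi}^{-i}_k)$ for every $a^i\in\mathbb{A}$. Since $\mathbb{A}$ is finite and the two maps $a^i\mapsto\bar{R}^i(a^i,\hat{\mu}^i_k)$ and $a^i\mapsto R^i(a^i,\hat{\pi}^{-i}_k)$ coincide on all of $\mathbb{A}$, they attain their maxima at exactly the same set of actions; that is, their $\argmax$ sets are identical. Because the paper fixes a single deterministic tie-breaking rule, the unique selected maximizer is the same whether agent $i$ scores actions by $\bar{R}^i(\cdot,\hat{\mu}^i_k)$ or by $R^i(\cdot,\hat{\pi}^{-i}_k)$, which is precisely the asserted best-response equivalence.

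Next, for the general statement, I would note that a deterministic decision mechanism sees the beliefs only through the induced expected-reward vectors. Applying Lemma~\ref{lem:equal_exp_rews} not just at stage $k$ but at every stage $l\in\{0,\dots,k\}$ yields $\bar{R}^i(\cdot,\hat{\mu}^i_l)=R^i(\cdot,\hat{\pi}^{-i}_l)$ for all such $l$, so the sequences of expected-reward vectors generated by $(\hat{\mu}_l)_{0\le l\le k}$ and by $(\hat{\pi}_l)_{0\le l\le k}$ agree entry-by-entry and stage-by-stage. Any deterministic map applied to identical inputs returns identical outputs, so each agent $i$ produces the same $a^i_{k+1}$ under either representation; collecting these across all $i\in[N]$ gives the same joint action $a_{k+1}$.

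There is no substantive obstacle here, as the content is carried entirely by Lemma~\ref{lem:equal_exp_rews}. The only points requiring care, which I would state explicitly rather than gloss over, are (i) that it is equality of the reward \emph{vectors}---not merely of their maximal values---that forces equality of the $\argmax$ sets, and (ii) that a shared deterministic tie-breaking convention is needed to pin down a unique action when the common $\argmax$ set is not a singleton, since without such a rule the two representations could in principle resolve ties differently. I would therefore make the dependence on the fixed tie-breaking convention explicit in the argument.
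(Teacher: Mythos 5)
Your proposal is correct and follows exactly the route the paper intends: the corollary is stated there without proof as a direct consequence of Lemma~\ref{lem:equal_exp_rews}, and your argument---equality of the full reward vectors forces equality of the $\argmax$ sets, with the fixed deterministic tie-breaking rule (and, for the general claim, the stage-by-stage identity of the reward-vector sequences) pinning down the same selected actions---is precisely the omitted derivation, with the right points of care made explicit.
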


Recall that FP combines $(\hat{\pi}_k)_{k\geq 0}$ with best-response action-selection. Inspired by this, we simply define aggregate fictitious play as best responding to aggregate beliefs.

\begin{definition}
In \textit{aggregate fictitious play (agg-FP)}, at each time $k\geq 1$, each agent $i$ updates its aggregate belief and selects an action according to
\begin{subequations} \label{eq:agg_fp}
\begin{align}
&\hat{\mu}^i_{k} = \hat{\mu}^i_{k-1} + \alpha_{k}(\mathds{1}\{x^{-i}_{k}\}-\hat{\mu}^i_{k-1}), \label{eq:mu_with_br}\\
& a^i_{k+1} \in\argmax_{a^i\in\mathbb{A}} \{\bar{R}^i(a^i,\hat{\mu}^i_k)\}. \label{eq:agg_br}
\end{align}
\end{subequations}
\end{definition}

From Corollary~\ref{cor:same_actions}, for polymatrix anonymous games, it follows that FP and agg-FP produce identical action trajectories. As a result, in such games, agg-FP inherits the convergence properties of FP. However, before formally stating this, we must first clarify what we mean by the convergence of FP and extend this notion to agg-FP. Convergence of FP 
refers to the convergence of the beliefs $(\hat{\pi}_k)_{k \geq 0}$ to $\mathbb{NE}(\mathcal{G})$ \cite{fudenberg_levine_th-of-learning-in-games, ozdaglar-et-al_independent_learning-stochastic-games}. Akin to this, we say that agg-FP converges if the resulting empirical action frequencies converge to $\mathbb{NE}(\mathcal{G})$.

\begin{definition}
Let $(a_k)_{k \geq 0}$ be a joint action sequence generated by agg-FP. Initialize $\hat{\gamma}_0 = \mathds{1}\{a_0\}$ and define the empirical action frequencies $(\hat{\gamma}_k)_{k \geq 0}$ for all $i \in [N]$ and $k \geq 1$ as
\begin{align}
& \hat{\gamma}^i_{k} = \hat{\gamma}^i_{k-1} + \alpha_k(\mathds{1}\{a^i_{k}\} - \hat{\gamma}^i_{k-1}). \label{eq:agg_fp_actions}
\end{align}
We say that agg-FP converges if
\begin{align*}
\lim_{k\to\infty}\inf_{\pi_*\in\mathbb{NE}(\mathcal{G})}\|\hat{\gamma}_k - \pi_*\| = 0.
\end{align*}
\end{definition}

The lemma below connects the convergence properties of agg-FP and FP, which we state without proof, as it follows directly from Corollary~\ref{cor:same_actions}.

\begin{lemma}\label{lem:aggFP_conv}
In anonymous polymatrix games, agg-FP converges (to a Nash equilibrium) if and only if FP does.
\end{lemma}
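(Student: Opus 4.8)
The plan is to leverage Corollary~\ref{cor:same_actions}, which is the crucial bridge already established in the excerpt. That corollary guarantees that in an anonymous polymatrix game, at every stage $k$ the aggregate beliefs $(\hat{\mu}_l)_{0\leq l\leq k}$ and the individual beliefs $(\hat{\pi}_l)_{0\leq l\leq k}$ produce the \emph{same} best-response action $a_{k+1}$, provided both dynamics start from the same initial action profile $a_0$ and use the same (deterministic) tie-breaking rule. So my first step is to fix a single action history and argue, by induction on $k$, that the action trajectory $(a_k)_{k\geq 0}$ generated by agg-FP coincides exactly with the action trajectory generated by classical FP. The base case is the shared initialization $a_0$; the inductive step is precisely the content of Corollary~\ref{cor:same_actions}, since if the two dynamics agree on $a_0,\dots,a_k$, then they have seen identical action histories, hence identical beliefs $\hat{\mu}^i_k$ and $\hat{\pi}^{-i}_k$ (which are both driven by the same $(a_l)_{0\leq l\leq k}$), and therefore select the same $a_{k+1}$.

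Once the action trajectories are shown to be identical, the second step is to observe that the convergence criteria for FP and agg-FP are defined through the \emph{same} object, namely the empirical action frequencies. For FP, convergence of the belief $\hat{\pi}_k$ to $\mathbb{NE}(\mathcal{G})$ is by definition the convergence of the averaged indicator of played actions; for agg-FP, convergence is defined in the excerpt via $\hat{\gamma}_k$ through Eq.~\eqref{eq:agg_fp_actions}, which is built from exactly the same recursion $\hat{\gamma}^i_k = \hat{\gamma}^i_{k-1} + \alpha_k(\mathds{1}\{a^i_k\}-\hat{\gamma}^i_{k-1})$ applied to the agent's own actions $a^i_k$. Since the action sequences $(a_k)_{k\geq 0}$ coincide under the two dynamics, the empirical frequency sequences $(\hat{\gamma}_k)_{k\geq 0}$ coincide as well (they obey the same recursion driven by the same inputs from the same initialization). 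Hence $\lim_{k\to\infty}\inf_{\pi_*\in\mathbb{NE}(\mathcal{G})}\|\hat{\gamma}_k-\pi_*\| = 0$ holds for agg-FP if and only if the identical limit holds for FP, which yields the stated ``if and only if.''

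The subtlety I would be most careful about is ensuring that the belief driving the FP best response and the belief $\hat{\gamma}_k$ used to measure convergence are genuinely the same recursion on the same data. In classical FP one typically tracks $\hat{\pi}^j_k$, the empirical frequency of agent $j$'s \emph{own} actions, and the convergence statement concerns this same $\hat{\pi}_k$; meanwhile agg-FP's convergence is stated in terms of $\hat{\gamma}_k$, which is also the empirical frequency of each agent's own actions via Eq.~\eqref{eq:agg_fp_actions}. I would make explicit that $\hat{\gamma}^i_k$ and the FP belief component $\hat{\pi}^i_k$ satisfy the identical update Eq.~\eqref{eq:pi_hat} with the same initialization (Remark~\ref{rem:FP_init}), so they are the same sequence once the actions agree; this is what makes the two convergence notions literally the same condition rather than merely analogous. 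This is the main (though mild) obstacle: it is a matching-of-definitions argument rather than a deep analytic one, and the real work was already done in Lemma~\ref{lem:equal_exp_rews} and Corollary~\ref{cor:same_actions}. This is also why the authors state the lemma without proof.
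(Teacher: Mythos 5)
Your argument is exactly the one the paper intends: the authors state the lemma without proof precisely because it ``follows directly from Corollary~\ref{cor:same_actions},'' and your induction on the shared action trajectory plus the observation that $\hat{\gamma}^i_k$ and $\hat{\pi}^i_k$ obey the identical recursion \eqref{eq:pi_hat} on the same actions is the correct way to fill in that directness. No gaps; this matches the paper's approach.
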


\begin{corollary}
Since FP converges in polymatrix zero-sum \cite{Fict_play_in_nets_Ewerhart_Valkanova} and potential games \cite{monderer_shapley_pot-games}, so does agg-FP.
\end{corollary}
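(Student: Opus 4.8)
The plan is to obtain the corollary as an immediate consequence of Lemma~\ref{lem:aggFP_conv} together with the two cited convergence results for classical FP. The logical skeleton is short: \cite{Fict_play_in_nets_Ewerhart_Valkanova} and \cite{monderer_shapley_pot-games} guarantee that the FP beliefs $(\hat{\pi}_k)_{k\geq 0}$ converge to $\mathbb{NE}(\mathcal{G})$ in zero-sum polymatrix games and in potential games, respectively; Lemma~\ref{lem:aggFP_conv} then transfers this convergence to agg-FP on the anonymous polymatrix instances of these classes. So the proof reduces to checking that the hypotheses of Lemma~\ref{lem:aggFP_conv} are met on the relevant game classes and then invoking it.

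To make the transfer fully explicit (in case Lemma~\ref{lem:aggFP_conv} is to be unpacked rather than cited), I would trace the chain back to Corollary~\ref{cor:same_actions}. That corollary shows that in anonymous polymatrix games FP and agg-FP, started from the same $a_0$ and driven by the same step sizes $(\alpha_k)_{k\geq 1}$, generate identical action trajectories $(a_k)_{k\geq 0}$. I would then observe that the empirical-frequency recursion \eqref{eq:agg_fp_actions} for $\hat{\gamma}_k$ and the FP belief recursion \eqref{eq:pi_hat} for $\hat{\pi}_k$ are the \emph{same} affine update with the \emph{same} initialization $\mathds{1}\{a_0\}$; fed the same action sequence, they produce $\hat{\gamma}_k = \hat{\pi}_k$ for every $k$. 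Consequently the quantity $\inf_{\pi_*\in\mathbb{NE}(\mathcal{G})}\|\hat{\gamma}_k-\pi_*\|$ used to define agg-FP convergence coincides term by term with the FP belief error, so one converges to $\mathbb{NE}(\mathcal{G})$ exactly when the other does, which is precisely Lemma~\ref{lem:aggFP_conv}.

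The one point that needs care, and the closest thing to an obstacle, is aligning the game classes. Lemma~\ref{lem:aggFP_conv} requires the game to be both anonymous and polymatrix, whereas the cited FP results are stated for zero-sum polymatrix games and for (general) potential games. I would therefore read the corollary as asserting convergence over the intersections: anonymous polymatrix zero-sum games, for which \cite{Fict_play_in_nets_Ewerhart_Valkanova} applies directly, and anonymous polymatrix potential games, for which \cite{monderer_shapley_pot-games} applies. In each case the hypotheses of Lemma~\ref{lem:aggFP_conv} hold by assumption, so agg-FP inherits the FP convergence guarantee. No new estimates or fixed-point arguments are needed: the entire weight of the result sits in the earlier identification of the action trajectories, and the remaining task is a purely bookkeeping check that these class inclusions are the intended reading.
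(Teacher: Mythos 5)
Your proposal is correct and matches the paper's (implicit) argument exactly: the paper gives no separate proof of this corollary, treating it as an immediate consequence of Lemma~\ref{lem:aggFP_conv} combined with the cited FP convergence results, which is precisely your first paragraph; your unpacking via Corollary~\ref{cor:same_actions} and the identity $\hat{\gamma}_k=\hat{\pi}_k$ is also exactly how the paper justifies Lemma~\ref{lem:aggFP_conv} itself. Your reading of the statement as applying to the intersections (anonymous polymatrix zero-sum games and anonymous polymatrix potential games) is the intended one.
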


\begin{remark}
Notice that Lemma~\ref{lem:aggFP_conv} holds for polymatrix games only. In \textit{non-polymatrix} anonymous games, agg-FP and FP can generate different action trajectories. This calls for further research into whether agg-FP can converge in non-polymatrix anonymous games where FP does not. \qedwhite
\end{remark}

\begin{remark} \label{rem:exploration}
In the upcoming analysis, we focus on the $\delta$-greedy variant of agg-FP. In this version, with probability $\delta \in (0,1)$, all agents ``collectively explore'' by selecting actions uniformly at random, while with probability $(1 - \delta)$ they ``act greedily'' by playing their best responses. Thus, $x^{-i}_k$ in \eqref{eq:mu_with_br} becomes a random variable that takes value $\sum_{j\neq i}\mathds{1}\{\argmax_{a^j\in\mathbb{A}}\{\bar{R}^j(a^j,\hat{\mu}_k^j)\}\}$ with probability $\delta/|\mathbb{X}|+1-\delta$, and any other value $\xi^{-i}$ in $\mathbb{X}$ with probability $\delta/|\mathbb{X}|$. This mechanism ensures sufficient exploration in the model-free setting, which we cover in Section~\ref{sec:Model_Free}. 

We note that the exploration event is dependent across agents; that is, at each iteration, either all agents explore, or none do. Nonetheless, we believe that our results can be extended, without much difficulty, to settings with agent-wise independent exploration, time-dependent exploration probabilities (e.g., decaying over time), and agent-specific exploration rates.
\qedwhite
\end{remark}

\subsection{Differential Equation Approximations}

\subsubsection{Aggregate Best-Response Dynamics}

Best-response (BR) dynamics is regarded as the continuous-time counterpart of FP \cite{fudenberg_levine_th-of-learning-in-games}. For all $\delta \in (0,1)$, $i \in [N]$, and $t \geq 0$, the $\delta$-greedy version of this dynamics is defined as
\begin{align}
\dot{\pi}^i_t = (1-\delta)\mathds{1}\left\{\argmax_{a^i\in\mathbb{A}}\{R^i(a^i,\pi^{-i}_t)\}\right\} + \frac{\delta\mathbf{1}}{n} - \pi^i_t. \label{eq:brd}
\end{align}
We will avoid restating that \eqref{eq:brd} represents the $\delta$-greedy version, and will often refer to it as the BR dynamics. In \cite{benaim_et_al_stoch-approx-and-diff-inclus}, the authors show that the limit sets of $(\hat{\pi}_k)_{k\geq 0}$ and $(\pi_t)_{t\geq 0}$ coincide whenever the BR dynamics has globally asymptotically stable equilibria, providing an alternative way to analyze the long-term behavior of FP.

We derive the continuous-time counterpart of $\delta$-greedy agg-FP based on the approach in \cite{benaim_et_al_stoch-approx-and-diff-inclus}. Recalling the exploration scheme in Remark~\ref{rem:exploration} and letting $a^j_{k,*}:=\argmax_{a^j\in\mathbb{A}}\{\bar{R}^j(a^j,\hat{\mu}^j_k)\}$ for all $j\in[N]$, we can write
\begin{align}
&\hat{\mu}^i_{k+1} - \hat{\mu}^i_k - \alpha_{k}\mathcal{E}^i_k \nonumber\\
&= \alpha_k\left((1-\delta) \mathds{1}\{\sigma(a^{-i}_{k,*})\} + \frac{\delta \mathbf{1}}{|\mathbb{X}|} - \hat{\mu}^i_{k}\right), \label{eq:mu_for_stoch_approx}
\end{align}
where $\mathcal{E}^i_k$ is the approximation error given by
\begin{align*}
\mathcal{E}^i_k &:= \mathds{1}\{x^{-i}_k\} - (1-\delta)\mathds{1}\{\sigma(a^{-i}_{k,*})\} - \frac{\delta\mathbf{1}}{|\mathbb{X}|}\\
&= \mathds{1}\{x^{-i}_k\} -\mathbb{E}[\mathds{1}\{x^{-i}_k\}~|~\hat{\mu}_k].
\end{align*}
Since $(\mathcal{E}^1_k,\dots,\mathcal{E}^N_k)_{k\geq 0}$ is a zero-mean martingale difference sequence and the reward values are bounded, \eqref{eq:mu_for_stoch_approx} satisfies \cite[Definition~III~\&~IV]{benaim_et_al_stoch-approx-and-diff-inclus}. In addition, $(\alpha_k)_{k \geq 0}$ satisfies \eqref{eq:stepsize_conds}. As a result, \cite{benaim_et_al_stoch-approx-and-diff-inclus} yields the following system, with $i \in [N]$ and $t \geq 0$:
\begin{align}
&\dot{\mu}^i_t = (1-\delta)\mathds{1}\left\{\sum_{j\neq i}\mathds{1}\left\{\argmax_{a^j\in\mathbb{A}}\{\bar{R}^j(a^j,\mu^{j}_t)\}\right\}\right\} + \frac{\delta\mathbf{1}}{|\mathbb{X}|} - \mu^i_t. \label{eq:aggbr_mu}
\end{align}
Similarly, the set of differential equations below, with $i\in[N]$, is the continuous-time approximation of \eqref{eq:agg_fp_actions}:
\begin{align}
&\dot{\gamma}^i_t = (1-\delta)\mathds{1}\left\{\argmax_{a^i\in\mathbb{A}}\{\bar{R}^i(a^i,\mu^{i}_t)\}\right\} + \frac{\delta\mathbf{1}}{n} - \gamma^i_t. \label{eq:aggbr_gamma}
\end{align}
We refer to \eqref{eq:aggbr_mu}-\eqref{eq:aggbr_gamma} jointly as the ($\delta$-greedy) \textit{aggregate best-response (agg-BR) dynamics}.

\begin{remark}
If the equilibria of \eqref{eq:aggbr_mu} admits a Lyapunov function, then it follows from \cite{benaim_et_al_stoch-approx-and-diff-inclus} that the limit sets of $(\hat{\mu}_k)_{k\geq 0}$ and $(\hat{\gamma}_k)_{k\geq 0}$ are the equilibria of \eqref{eq:aggbr_mu} and \eqref{eq:aggbr_gamma}, respectively. \qedwhite
\end{remark}

\subsubsection{Relating BR to Agg-BR}

We proceed by comparing the trajectories of BR and agg-BR dynamics (like we did for FP and agg-FP in Section~\ref{subsec:Agg_FP}). To do so, we first need them to have \textit{consistent initial values}, meaning that
\begin{subequations} \label{eq:init_vals}
\begin{align}  
&\gamma_0 = \pi_0, \label{eq:init_vals_gamma}\\
&\mu^i_0(x^{-i}) = \sum_{\{a^{-i}:\sigma(a^{-i}) = x^{-i}\}} \prod_{j \neq i} \pi^j_0(a^j),~\text{for all } x^{-i} \in \mathbb{X}. \label{eq:init_vals_mu}
\end{align}
\end{subequations}
Notice that \eqref{eq:init_vals_mu} is equivalent to $ \mu^i_0(x^{-i}) = \mathbb{P}(\sigma(A^{-i}) = x^{-i}) $, where $ A^j \sim \pi^j_0 $ for all $ j $.

\begin{remark}
The initial conditions of BR and agg-BR are consistent if they are matched with those of FP and agg-FP, that is, if $\pi_0 = \gamma_0 = \mathds{1}\{a_0\}$ and $\mu^i_0 = \mathds{1}\{\sigma(a^{-i}_0)\}$ for all $i$. \qedwhite
\end{remark}

As in Remark~\ref{rem:different_beliefs}, assuming consistent initial values and letting $A^j \sim \pi^j_t$ for all $j \in [N]$, we can verify that there exists $t > 0$ and $x^{-i}\in\mathbb{X}$ such that
\begin{align*}
\mathbb{P}(\sigma(A^{-i})=x^{-i}) &\neq \mu^i_t(x^{-i}).
\end{align*}
Thus, $(\pi^{-i}_t)_{t\geq 0}$ and $(\mu^i_t)_{t\geq 0}$ eventually become ``misaligned.'' Note that this is the case even when $\mathcal{G}$ is polymatrix.

Nevertheless, analogous to Lemma~\ref{lem:equal_exp_rews}, $(\mu_t)_{t\geq 0}$ and $(\pi_t)_{t\geq 0}$ generate \textit{identical reward trajectories} whenever $\mathcal{G}$ is polymatrix. Crucially, this also leads to $\gamma_t = \pi_t$ for all $t \geq 0$.

\begin{lemma} \label{lem:aggbr_br_rew_equiv}
If $\mathcal{G}$ is polymatrix and anonymous, and \eqref{eq:init_vals} holds, then $\pi^i_t = \gamma^i_t$ and $R^i(\cdot,\pi^{-i}_t) = \bar{R}^i(\cdot,\mu^i_t)$ for all $i\in[N]$ and $t\geq 0$.
\end{lemma}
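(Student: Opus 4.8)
The plan is to reduce both reward expressions to a single affine functional of the \emph{mean} of the respective belief, observe that these means obey the same linear ODE, and then close the loop through uniqueness of solutions. First I would extract the structural consequence of combining the polymatrix and anonymity hypotheses: for each $i$ the pairwise matrices $r^{ij}$, $j\neq i$, must all coincide, since otherwise swapping the actions of two opponents would change $r^i$ and violate Definition~\ref{def:anon_games}. Writing $q^i := r^{ij}$ for this common matrix, Lemma~\ref{lem:agg_rep} then gives a succinct representation that is \emph{linear} in the count vector, $\bar r^i(a^i,x^{-i}) = \sum_{c\in\mathbb{A}} q^i(a^i,c)\,x^{-i}_c$. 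Consequently $\bar R^i(a^i,\mu^i_t) = \sum_{c} q^i(a^i,c)\,\bar\mu^i_t(c)$ depends on $\mu^i_t$ only through its mean $\bar\mu^i_t := \sum_{x^{-i}\in\mathbb{X}} x^{-i}\,\mu^i_t(x^{-i})$, whereas marginalizing the polymatrix reward gives $R^i(a^i,\pi^{-i}_t) = \sum_{c} q^i(a^i,c)\,s^i_t(c)$ with $s^i_t := \sum_{j\neq i}\pi^j_t$. Thus the reward identity reduces to the purely first-moment claim $s^i_t = \bar\mu^i_t$ for every $i$ and $t\ge 0$; this is the continuous-time manifestation of the summation-reordering that powered Lemma~\ref{lem:equal_exp_rews}, and it is exactly here that the polymatrix assumption is indispensable, since for non-separable rewards $\bar r^i$ need not be affine in $x^{-i}$.

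Next I would verify the initial condition and derive the two mean dynamics. From the consistency condition \eqref{eq:init_vals_mu} and linearity of expectation, $\bar\mu^i_0 = \mathbb{E}[\sigma(A^{-i})] = \sum_{j\neq i}\pi^j_0 = s^i_0$ when $A^j\sim\pi^j_0$. Applying $\sum_{x^{-i}} x^{-i}(\cdot)$ to \eqref{eq:aggbr_mu} and $\sum_{j\neq i}(\cdot)$ to \eqref{eq:brd} yields, writing $b^j_t := \argmax_{a^j}\bar R^j(a^j,\mu^j_t)$ and $\tilde b^j_t := \argmax_{a^j}R^j(a^j,\pi^{-j}_t)$, the linear ODEs $\dot{\bar\mu}^i_t = (1-\delta)\sigma(b^{-i}_t) + \delta\tfrac{N-1}{n}\mathbf{1} - \bar\mu^i_t$ and $\dot{s}^i_t = (1-\delta)\sigma(\tilde b^{-i}_t) + \delta\tfrac{N-1}{n}\mathbf{1} - s^i_t$. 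The one compatibility that requires care is the exploration term: pairing the forcing $\tfrac{\delta\mathbf{1}}{|\mathbb{X}|}$ against $x^{-i}$ gives $\delta\tfrac{N-1}{n}\mathbf{1}$, because the uniform law on $\mathbb{X}$ is symmetric across coordinates with total count $N-1$, and this matches the contribution $(N-1)\tfrac{\delta\mathbf{1}}{n}$ that $s^i_t$ inherits from the $N-1$ individual uniform-exploration terms. Hence the two mean dynamics are identical \emph{provided} $b^{-i}_t = \tilde b^{-i}_t$.

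Finally I would close the loop while avoiding the circularity that $b^j_t$ depends on the very reward we are comparing. I would fix the BR-dynamics trajectory $(\pi_t)$, define the auxiliary trajectory $\tilde\mu$ by the linear ODE above with forcing $\sigma(\tilde b^{-i}_t)$ taken from the $\pi$-system and initial value $\mu^i_0$, and note that its mean satisfies the \emph{same} linear equation as $s^i_t$ with the same initial value and the same forcing \emph{by construction}; uniqueness for linear ODEs then gives $\bar{\tilde\mu}^i_t = s^i_t$, hence $\bar R^i(\cdot,\tilde\mu^i_t) = R^i(\cdot,\pi^{-i}_t)$ and therefore $\argmax_{a^i}\bar R^i(a^i,\tilde\mu^i_t) = \tilde b^i_t$. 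This last equality shows that $\tilde\mu$ in fact solves the genuine agg-BR system \eqref{eq:aggbr_mu}, so by uniqueness $\tilde\mu = \mu$ and the reward identity follows; the same best-response coincidence turns $\pi^i_t$ into a solution of \eqref{eq:aggbr_gamma} with $\gamma_0=\pi_0$, giving $\gamma^i_t=\pi^i_t$. I expect the main obstacle to be making this uniqueness argument rigorous despite the discontinuous right-hand sides created by the $\argmax$ operators: one must argue that, under the fixed deterministic tie-breaking rule, the trajectories are piecewise solutions of linear ODEs whose switching times are well-defined, so that Carathéodory-type uniqueness applies on each interval and the identity propagates across switches. The conceptually interesting subtlety to flag is that $\mu^i_t$ is \emph{not} the pushforward of $\prod_{j\neq i}\pi^j_t$ under $\sigma$ (the beliefs genuinely misalign, as noted before the statement), yet their \emph{means} stay equal, and linearity of $\bar r^i$ guarantees that the mean is all the reward ever sees.
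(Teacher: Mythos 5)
Your argument is correct in substance but takes a genuinely different route from the paper's. The paper treats the reward vectors $R^i(\cdot,\pi^{-i}_t)$ and $\bar R^i(\cdot,\mu^i_t)$ themselves as the quantities to track: it checks equality at $t=0$ directly from \eqref{eq:init_vals_mu}, then differentiates both along \eqref{eq:brd} and \eqref{eq:aggbr_mu} and shows the derivatives coincide whenever the rewards do, using exactly the summation-reordering of Lemma~\ref{lem:equal_exp_rews} (swap the sum over opponents with the sum over action profiles, then collect terms by the count vector). You instead pass to \emph{first moments}: you observe that polymatrix plus anonymity makes $\bar r^i$ affine in $x^{-i}$, so both rewards factor through $\bar\mu^i_t$ and $s^i_t=\sum_{j\neq i}\pi^j_t$ respectively, and the whole lemma reduces to the linear ODE identity $\bar\mu^i_t=s^i_t$. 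This buys a cleaner structural explanation of why the misalignment of the full distributions (noted before the lemma statement) is harmless --- the reward only ever sees the mean --- and your auxiliary-trajectory construction handles the circularity (best responses depend on the rewards being compared) more explicitly than the paper, which leaves the propagation of the equality ``$e(t)=0 \Rightarrow \dot e(t)=0$'' at the same informal level you flag. Both proofs ultimately lean on uniqueness for a piecewise-linear ODE with $\argmax$-induced switching under the fixed tie-breaking rule, so you are not assuming anything the paper does not.

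One point to tighten: anonymity does \emph{not} force the pairwise matrices $r^{ij}$, $j\neq i$, to coincide. Swapping the actions of opponents $j$ and $l$ only yields $r^{ij}(a^i,b)+r^{il}(a^i,c)=r^{ij}(a^i,c)+r^{il}(a^i,b)$ for all $b,c$, i.e.\ $r^{ij}(a^i,\cdot)-r^{il}(a^i,\cdot)$ is constant in the second argument. Since the polymatrix decomposition is not unique, you can renormalize (shift each $r^{ij}$ by a function of $a^i$ summing to zero over $j$) so that a common matrix $q^i$ exists and $\bar r^i(a^i,x^{-i})=\sum_c q^i(a^i,c)\,x^{-i}_c$ holds; state this renormalization rather than claiming the matrices ``must'' coincide, and the rest of your argument goes through unchanged.
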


\begin{proof}
Notice that \eqref{eq:init_vals_mu} implies 
\begin{align*}
R^i(\cdot,\pi^{-i}_0) &= \sum_{j\neq i} \sum_{a^{-i}\in\mathbb{A}^{\otimes(N-1)}} r^i(\cdot, a^{-i}) \prod_{j\neq i}\pi^{j}_0(a_j)\\
&= \sum_{j\neq i} \sum_{x^{-i}\in\mathbb{X}}\sum_{\{a^{-i}~:~\sigma(a^{-i}) = x^{-i}\}} \bar{r}^i(\cdot, x^{-i}) \prod_{j\neq i}\pi^{j}_0(a_j)\\
&= \bar{R}^i(\cdot,\mu^{i}_0).
\end{align*}

Therefore, it is sufficient to show that the equality $dR^i(\cdot,\pi^{-i}_t)/dt = d\bar{R}^i(\cdot,\mu^{i}_t)/dt$ holds whenever $R^i(\cdot,\pi^{-i}_t) = \bar{R}^i(\cdot,\mu^{i}_t)$. Assuming that $R^i(a^i,\pi^{-i}_t) = \bar{R}^i(a^i,\mu^{i}_t)$ for all $a^i\in\mathbb{A}$ and letting $a^j_{t,*} := \argmax_{a^j\in\mathbb{A}}\{R^j(a^j,\pi^{-j}_t)\}$, we have
\begin{align*}
&\frac{dR^i(a^i,\pi^{-i}_t)}{dt}\\
&= (1-\delta)\sum_{a^{-i}\in\mathbb{A}^{\otimes(N-1)}}\sum_{j\neq i}r^i(a^i,a^j)\mathds{1}_{\{a^j=a^j_{t,*}\}}\\
&\quad +\frac{\delta}{n}\sum_{a^{-i}\in\mathbb{A}^{\otimes(N-1)}}\sum_{j\neq i}r^i(a^i,a^j) -\sum_{a^{-i}\in\mathbb{A}^{\otimes(N-1)}}\sum_{j\neq i}r^i(a^i,a^j)\pi^j_t(a^j)\\
&= (1-\delta)\sum_{x^{-i}\in\mathbb{X}} \bar{r}^i(a^i,x^{-i})\mathds{1}_{\left\{ x^{-i}=\sigma(a^{-i}_{t,*}) \right\}}\\
&\quad +\frac{\delta}{|\mathbb{X}|}\sum_{x^{-i}\in\mathbb{X}}\bar{r}^i(a^i,x^{-i}) - \bar{R}^i(a^i,\mu^i_t)\\
&= \frac{d\bar{R}^i(a^i,\mu^{i}_t)}{dt}.
\end{align*}
\end{proof}

\begin{corollary} \label{cor:same_conv_br_agg_br}
If $\mathcal{G}$ is polymatrix and anonymous, then the limit sets of $(\gamma_t)_{t\geq 0}$ and $(\pi_t)_{t\geq 0}$ are the same.
\end{corollary}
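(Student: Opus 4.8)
The plan is to read this off essentially for free from Lemma~\ref{lem:aggbr_br_rew_equiv}, which already does the heavy lifting. That lemma establishes, under the consistent initialization \eqref{eq:init_vals}, the \emph{pointwise} identity $\gamma^i_t = \pi^i_t$ for every agent $i\in[N]$ and every $t\geq 0$ (the reward equality is only the vehicle used to drive this conclusion). So the two curves $t\mapsto\gamma_t$ and $t\mapsto\pi_t$ are not merely asymptotically close: they are the \emph{same} curve in the compact set $(\Delta(\mathbb{A}))^N$. The first step is therefore just to recall that the $\omega$-limit set of a continuous trajectory depends only on its tail, $\omega(\pi_0) = \bigcap_{s\geq 0}\overline{\{\pi_t : t\geq s\}}$, and likewise for $\gamma$. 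Since $\gamma_t=\pi_t$ for all $t$, the tails coincide for every $s$, hence $\overline{\{\gamma_t : t\geq s\}} = \overline{\{\pi_t : t\geq s\}}$, and the two $\omega$-limit sets are literally equal. This settles the claim at the level of each matched pair of trajectories.

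To phrase this as a statement about the two dynamical systems, I would match initializations through \eqref{eq:init_vals}: given any $\pi_0$, setting $\gamma_0=\pi_0$ and defining $\mu_0$ via \eqref{eq:init_vals_mu} produces a consistent pair, so that every limit set arising from BR is realized as a $\gamma$-limit set of agg-BR, yielding one inclusion. Because $\gamma_0=\pi_0$ recovers $\pi_0$ from $\gamma_0$, this assignment $\pi_0\mapsto(\gamma_0,\mu_0)$ is injective, and its image is exactly the set of \emph{consistent} agg-BR initializations. Restricting to that image, the correspondence is a bijection under which matched trajectories coincide, so both inclusions hold and the limit sets are identical.

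The only place the argument is not completely automatic is the precise meaning of ``limit set'' and the accompanying reverse inclusion. In \eqref{eq:aggbr_mu}--\eqref{eq:aggbr_gamma} the aggregate belief $\mu$ evolves autonomously and drives $\gamma$, yet not every $\mu_0\in\prod_i\Delta(\mathbb{X})$ is consistent, i.e. of the product-aggregate form \eqref{eq:init_vals_mu}; launching agg-BR from an inconsistent $\mu_0$ is simply outside the scope of Lemma~\ref{lem:aggbr_br_rew_equiv}, so no pointwise identity is available there. I would resolve this by confining attention to consistent initializations, which is precisely the regime relevant to the paper: agg-FP always produces $\mu^i_0=\mathds{1}\{\sigma(a^{-i}_0)\}$, consistent with $\pi_0=\gamma_0=\mathds{1}\{a_0\}$, and it is this regime that feeds the stochastic-approximation conclusions of \cite{benaim_et_al_stoch-approx-and-diff-inclus}. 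Thus the main ``obstacle'' is really a modeling choice rather than a computation: once consistency is imposed, the trajectory-level equality $\gamma_t=\pi_t$ makes the equality of limit sets immediate, and in particular every equilibrium and every stability property of BR transfers verbatim to agg-BR.
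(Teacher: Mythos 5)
Your proposal is correct and matches the paper's (implicit) argument: the corollary is stated without proof precisely because Lemma~\ref{lem:aggbr_br_rew_equiv} already gives the trajectory-level identity $\gamma_t=\pi_t$ for all $t\geq 0$ under the consistent initialization \eqref{eq:init_vals}, from which equality of the limit sets is immediate. Your added remark that the claim should be read as restricted to consistent initializations (the only ones the paper ever uses, since agg-FP/agg-BR are initialized from a common action profile) is a fair and accurate clarification of the corollary's scope, not a deviation from the paper's route.
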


\section{Two-Timescale Agg-FP for Model-Free Learning} \label{sec:Model_Free} 



To implement agg-FP, each agent $i$ needs to know \( r^i \), which we assumed to be an anonymous matrix game. We now extend agg-FP to \textit{model-free} settings while also slightly generalizing the reward structure to accommodate anonymous games with random payoffs. 

As outlined in Section~\ref{sec:Framework}, games with random payoffs have stage games \( r_{\Theta_k} \), where \( \Theta_k \) are independently sampled from $\mathcal{P}$. We denote a generic (independent) copy of $\Theta_k$ by $\Theta$ and the vector of expected reward functions as $\mathbb{E}[r_\Theta]:=(\mathbb{E}[r^1_{\Theta^1}],\dots,\mathbb{E}[r^N_{\Theta^N}])$. To account for the uncertainty, we denote agent \( i \)'s action at stage \( k \) by \( A^i_k \) and its corresponding reward by $R^i_k = r^i_{\Theta^i_k}(A^i_k, A^{-i}_k).$


A common approach to extending FP (or its smooth version) to model-free settings is to combine it with a Q-learning inspired reward estimation process \cite{Conv_Multi_Timescale_RL_Algos_in_NFGs_Leslie_Collins,sayin-et-al_fict-play-in-zero-sum-stoch-games,donmez-sayin_ind-q-learning-for-poly-games}. Although these methods can guarantee convergence, efficient exploration remains a major challenge, since the number of joint actions grows exponentially with $N$ (see Remark~\ref{rem:complexity}). 

To address this issue, we exploit anonymity and adopt an aggregative approach. Specifically, each agent maintains a Q-table that depends on the aggregate of others' actions, and uses agg-FP to update its beliefs and select actions. These Q-tables estimate the aggregate representation $\bar{r}$. For the Q-table updates, we follow an approach similar to that in \cite{Conv_Multi_Timescale_RL_Algos_in_NFGs_Leslie_Collins}, where the Q-values are updated at a faster rate than the beliefs. The resulting Q-table has the updates below for all $i\in[N]$, $k\geq 1$, $a^i\in\mathbb{A}^i$, and $x^{-i}\in\mathbb{X}$:
\begin{align*}
&\hat{Q}^i_{k+1}(a^i,x^{-i})\\
&=\hat{Q}^i_{k}(a^i,x^{-i}) + \mathds{1}_{\{A^i_k=a^i,~X^{-i}_k=x^{-i}\}}\beta_{\#_k(a^i,x^{-i})}(R^i_k-\hat{Q}^i_{k}(a^i,x^{-i}))).
\end{align*}
Here, $\#_k(a^i,x^{-i})$ is the number of times that $(a^i,x^{-i})$ is visited up to and including time $k$ and $(\beta_k)_{k\geq 0}$ is a step-size sequence that satisfies
\begin{align*}
\sum_{k\geq 0}\beta_k = \infty,\quad \sum_{k\geq 0}\beta_k^2 < \infty,\quad \lim_{k\to\infty}\frac{\alpha_k}{\beta_k}=0.
\end{align*}
Additionally, in this section, we require $(\alpha_k)_{k\geq 0}$ and  $(\beta_k)_{k\geq 0}$ to be non-increasing and satisfy $\sup_{k\geq 0}\alpha_{\lfloor \eta k\rfloor}/\alpha_k \leq A_\eta < \infty$ and $\sup_{k\geq 0} \beta_{\lfloor \eta k\rfloor}/\beta_k \leq B_\eta < \infty$ for all \(\eta \in (0,1)\). A valid choice is $\alpha_k=(k+1)^{-0.7}$ and $\beta_k=(k+1)^{-0.6}$. 

\begin{algorithm}[h!]
\caption{Two-Timescale Agg-FP}
\begin{algorithmic}
\State \textbf{Input:} Initial actions $a_0$, exploration probability $\delta$, step-size sequences $(\alpha_k)_{k\geq 0}$, $(\beta_k)_{k\geq 0}$
\State \textbf{Output:} Empirical action frequencies $(\hat{\gamma}_k)_{k\geq 0}$
\State Initialize Q-table $Q_0 \gets 0$ and actions $A_0 \gets a_0$
\For{each timestep $k=0,1,\dots$}
    \State Execute actions $A_k$ and receive rewards $R_k$
    \For{each agent $i=1,\dots,N$}
        \State Update Q-table $\forall a^i\in\mathbb{A},~\forall x^{-i}\in\mathbb{X}$:
        \begin{align*}
            &\hat{Q}^i_{k+1}(a^i,x^{-i})\\
            &+=\mathds{1}_{\{A^i_k=a^i,~X^{-i}_k=x^{-i}\}}\beta_{\#_k(a^i,x^{-i})}(R^i_k-\hat{Q}^i_{k}(a^i,x^{-i})))
        \end{align*}
        \State Update belief and action frequencies:
        \If{k = 0}
            \State $\hat{\mu}^i_k = \mathds{1}\{X^{-i}_0\},\quad \hat{\gamma}^i_k = \mathds{1}\{A^i_0\}$
        \Else
            \State $\hat{\mu}^i_{k} = \hat{\mu}^i_{k-1} + \alpha_{k}(\mathds{1}\{X^{-i}_{k}\}-\hat{\mu}^i_{k-1})$
            \State $\hat{\gamma}^i_{k} = \hat{\gamma}^i_{k-1} + \alpha_{k}(\mathds{1}\{A^i_k\}-\hat{\gamma}^i_{k-1})$
        \EndIf
    \EndFor
    \vspace{1.5mm}
    \State Generate random number $\omega \sim \text{Uniform}(0,1)$
    \If{$\omega < \delta$}
        \State Explore: $A^i_{k+1}\sim\text{Uniform}(\mathbb{A}),~\forall i\in[N]$
    \Else
        \State $A^{i}_{k+1} = \argmax_{a^i\in\mathbb{A}}\{\hat{Q}^i_{k+1}(a^i,\hat{\mu}^i_{k})\},~\forall i\in[N]$
    \EndIf

\EndFor

\State \Return $(\hat{\gamma}_k)_{k\geq 0}$
\end{algorithmic}
\label{algo}
\end{algorithm}

We summarize the overall algorithm, called two-timescale agg-FP, in Algorithm~\ref{algo}. Our main result is as follows.

\begin{theorem} \label{thm:main}
Assume that the agents repeatedly play an anonymous polymatrix game with random payoffs $\mathcal{G}_{\mathcal{P}}$ using two-timescale agg-FP. If $\mathbb{NE}_\epsilon(\mathcal{G}_{\mathcal{P}})$ is attractive under the $\delta$-greedy BR dynamics with reward functions $\mathbb{E}[r_\Theta]$, then $(\hat{\gamma}_k)_{k\geq 0}$ converges to $\mathbb{NE}_\epsilon(\mathcal{G}_{\mathcal{P}})$.
\end{theorem}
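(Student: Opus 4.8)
The plan is to decompose the analysis along the two timescales, following the framework of \cite{Conv_Multi_Timescale_RL_Algos_in_NFGs_Leslie_Collins}. Since $\alpha_k/\beta_k\to 0$, the Q-table evolves faster than the beliefs $(\hat{\mu}_k)$ and the frequencies $(\hat{\gamma}_k)$. From the viewpoint of the fast dynamics the beliefs are quasi-static, so I would first argue that, for each agent $i$ and each pair $(a^i,x^{-i})$, the estimate $\hat{Q}^i_k(a^i,x^{-i})$ converges almost surely to the expected aggregate reward $\mathbb{E}[\bar{r}^i_{\Theta^i}(a^i,x^{-i})]$, i.e.\ the succinct representation (Lemma~\ref{lem:agg_rep}) of the mean game $\mathbb{E}[r_\Theta]$. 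From the viewpoint of the slow dynamics the Q-values are essentially equilibrated, so $(\hat{\mu}_k)$ and $(\hat{\gamma}_k)$ behave asymptotically like model-based agg-FP run on the \emph{known} mean game $\mathbb{E}[r_\Theta]$, which is again anonymous and polymatrix.

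For the fast timescale, the key point is that $\delta$-greedy exploration is triggered at every stage with probability $\delta>0$, independently across stages; hence by Borel--Cantelli every joint profile, and in particular every pair $(a^i,x^{-i})$, is visited infinitely often almost surely. The Q-update is then an asynchronous stochastic approximation that averages the observed stage rewards $R^i_k=\bar{r}^i_{\Theta^i_k}(A^i_k,X^{-i}_k)$ toward their conditional mean. The regularity conditions imposed on $(\alpha_k)$ and $(\beta_k)$---non-increasing, with $\sup_k\beta_{\lfloor\eta k\rfloor}/\beta_k\le B_\eta$ and the analogous bound $A_\eta$---are precisely those needed to control the relative per-coordinate update rates. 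Invoking the asynchronous two-timescale convergence argument of \cite{Conv_Multi_Timescale_RL_Algos_in_NFGs_Leslie_Collins} (the finite mean and variance of each $\Theta^i$ supply the required moment bounds), I obtain $\hat{Q}^i_k\to\mathbb{E}[\bar{r}^i_{\Theta^i}]$.

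With the Q-values converged, the slow recursion for $(\hat{\mu}_k,\hat{\gamma}_k)$ coincides, up to a vanishing perturbation, with $\delta$-greedy agg-FP \eqref{eq:agg_fp}, \eqref{eq:agg_fp_actions} driven by $\mathbb{E}[\bar{r}]$. Writing it in the form \eqref{eq:mu_for_stoch_approx}, the martingale difference $\mathcal{E}^i_k$ is zero-mean and bounded, so \cite{benaim_et_al_stoch-approx-and-diff-inclus} identifies the limit set of $(\hat{\mu}_k,\hat{\gamma}_k)$ with an internally chain-transitive set of the agg-BR dynamics \eqref{eq:aggbr_mu}--\eqref{eq:aggbr_gamma}. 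By Corollary~\ref{cor:same_conv_br_agg_br}, for an anonymous polymatrix game these sets coincide with those of the ordinary $\delta$-greedy BR dynamics \eqref{eq:brd} on $\mathbb{E}[r_\Theta]$. The hypothesis that $\mathbb{NE}_\epsilon(\mathcal{G}_{\mathcal{P}})$ is attractive under that BR dynamics then supplies the attractor and, via the attractivity/Lyapunov requirement of \cite{benaim_et_al_stoch-approx-and-diff-inclus}, forces $\hat{\gamma}_k\to\mathbb{NE}_\epsilon(\mathcal{G}_{\mathcal{P}})$. The $\epsilon$ absorbs the bias from the constant exploration term $\delta\mathbf{1}/n$ in \eqref{eq:brd}, which precludes convergence to an exact equilibrium.

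The main obstacle I anticipate is the rigorous decoupling of the two timescales: one must show that the slow iterates do not destabilize the fast Q-recursion and, conversely, that the residual error $\hat{Q}^i_k-\mathbb{E}[\bar{r}^i_{\Theta^i}]$ feeding the best-response map \eqref{eq:agg_br} vanishes fast enough to leave the limiting dynamics unchanged. This is delicate because the best-response operator is discontinuous, so arbitrarily small Q-errors can flip the $\argmax$; handling this requires the $\delta$-exploration term to regularize the dynamics and a careful appeal to the differential-inclusion (rather than ODE) machinery of \cite{benaim_et_al_stoch-approx-and-diff-inclus}, supported by the asynchronous-update bookkeeping that the constants $A_\eta,B_\eta$ are designed to control.
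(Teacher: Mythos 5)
Your proposal follows essentially the same route as the paper's proof: a two-timescale decomposition in which the fast Q-recursion converges to $\mathbb{E}[\bar{r}^i_{\Theta^i}]$, the calibrated slow dynamics reduce to the $\delta$-greedy agg-BR dynamics on the mean game, Corollary~\ref{cor:same_conv_br_agg_br} transfers attractivity from the BR dynamics to the agg-BR dynamics, and the stochastic-approximation machinery identifies the limit set of $(\hat{\gamma}_k)_{k\geq 0}$ with that of $(\gamma_t)_{t\geq 0}$. The only cosmetic difference is that the paper handles the asynchronous, set-valued aspects in one step by invoking \cite{perkins-leslie_asynch-stoch-approx-with-inclusion}, whereas you assemble the same ingredients from \cite{Conv_Multi_Timescale_RL_Algos_in_NFGs_Leslie_Collins} and \cite{benaim_et_al_stoch-approx-and-diff-inclus} plus an explicit Borel--Cantelli visitation argument.
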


\begin{proof}
Similar to \cite{Conv_Multi_Timescale_RL_Algos_in_NFGs_Leslie_Collins,sayin-et-al_fict-play-in-zero-sum-stoch-games}, our approach is to employ two-timescale stochastic approximation techniques. Once we obtain the differential equation approximations for \(Q\) and \(\mu\), we will use Corollary~\ref{cor:same_conv_br_agg_br} and link the limit sets of these dynamics with those of the BR dynamics.

The Q-value of \((a^i, x^{-i})\) is updated only when that pair is observed. To accommodate this asynchronous scheme, we employ the results presented in \cite{perkins-leslie_asynch-stoch-approx-with-inclusion}. The two-timescale structure allows us to decouple the dynamics: the fast-varying \(Q\) treats \(\mu\) as effectively constant, and the slower \(\mu\) sees \(Q\) as effectively stabilized. Thus, following \cite{perkins-leslie_asynch-stoch-approx-with-inclusion}, the fast dynamics are given for all $i\in[N]$, $a^i\in\mathbb{A}$, and $x^{-i}\in\mathbb{X}$ by
\begin{align*}
\dot{q}^i_t(a^i,x^{-i}) = \mathbb{E}[\bar{r}^i_{\Theta^i}(a^i,x^{-i})] - q_t(a^i,x^{-i}),
\end{align*}
where $\Theta^i\sim\mathcal{P}^i$. This system has the globally asymptotically stable equilibrium $\mathbb{E}[\bar{r}^i_{\Theta^i}(a^i,x^{-i})]$. Hence, letting $a^j_{t,*}:= \argmax_{a^j\in\mathbb{A}}\{\mathbb{E}[\bar{R}^j_{\Theta^j}(a^j,\mu^{j}_t)]\}$ for all $j\in[N]$, the calibrated slow dynamics for each $i\in[N]$ is
\begin{align*}
&\dot{\mu}^i_t = (1-\delta)\mathds{1}\{\sigma(a^{-i}_{t,*})\} + \frac{\delta\mathbf{1}}{|\mathbb{X}|} - \mu^i_t,\\
&\dot{\gamma}^i_t = (1-\delta)\mathds{1}\{a^i_{t,*}\} + \frac{\delta\mathbf{1}}{n} - \gamma^i_t.
\end{align*}

The slow dynamics are exactly the agg-BR dynamics with the reward functions $\mathbb{E}[\bar{r}_{\Theta}]:=(\mathbb{E}[\bar{r}^1_{\Theta^1}],\dots,\mathbb{E}[\bar{r}^N_{\Theta^N}])$. Since $\mathcal{G}_{\mathcal{P}}$ is polymatrix, the matrix game with payoffs $\mathbb{E}[r_{\Theta}]$ is polymatrix. In addition, by assumption, $\mathbb{NE}_\epsilon(\mathcal{G}_\mathcal{P})$ is attractive under the $\delta$-greedy BR dynamics for this game. Therefore, using Corollary~\ref{cor:same_conv_br_agg_br}, we conclude that $\mathbb{NE}_\epsilon(\mathcal{G}_\mathcal{P})$ is attractive for the $\delta$-greedy agg-BR dynamics. Finally, \cite{perkins-leslie_asynch-stoch-approx-with-inclusion} ensures that $(\gamma_t)_{t\geq 0}$ and $(\hat{\gamma}_k)_{k\geq 0}$ share the same limit set.
\end{proof}

\begin{remark} \label{rem:eps_greedy_in_zero_sum_and_pot}
If \( \mathcal{G} \) is a zero-sum polymatrix game or a potential game, then for any \( \epsilon > 0 \), $\mathbb{NE}_\epsilon(\mathcal{G})$ is attractive for the \( \delta \)-greedy BR dynamics with sufficiently small \( \delta \). For these games, \( \mathbb{NE}(\mathcal{G}) \) admits a  Lyapunov function \( V \) \cite{monderer_shapley_pot-games,Fict_play_in_nets_Ewerhart_Valkanova} with a continuous (or upper hemicontinuous) gradient and the BR dynamics is upper hemicontinuous. Consequently, \( \nabla V^T(\pi) \dot{\pi} \) can always be made negative outside \( \mathbb{NE}_\epsilon(\mathcal{G}) \) for sufficiently small \( \delta \) (where $\dot{\pi}$ follows the $\delta$-greedy BR dynamics). This result paves the way for potential applications of two-timescale agg-FP, as we demonstrate for a specific game example in the next section. \qedwhite
\end{remark}

\section{Numerical Analysis} \label{sec:Numerical}

We illustrate our results using a 4-agent rock-paper-scissors (RPS) game \cite{fudenberg_levine_th-of-learning-in-games} with additive perturbations. At each stage $k$, each agent \( i \) plays a standard pairwise RPS game against every other agent and accumulates the reward
\begin{align}
r^i(a^i_k, a^{-i}_k) = \sum_{j=1,~j\neq i}^{N} \mathds{1}\{a^i_k\}^T  
\begin{bmatrix} 0 & -1 & 1 \\ 1 & 0 & -1 \\ -1 & 1 & 0 \end{bmatrix}  
\mathds{1}\{a^j_k\}. \label{eq:rps}
\end{align}
A random variable \( \Theta^i_k \) perturbs agent \( i \)'s reward, yielding the stage reward $r^i_{\theta^i_k}(a^i_k, a^{-i}_k) = r^i(a^i_k, a^{-i}_k) + \theta^i_k$, where $\theta^i_k$ is the realization of $\Theta^i_k$. For each $i\in[4]$ and $k\geq 0$, we assume that \( \Theta^i_k \) is sampled independently from the values \(\{-4, -2, 0, 2, 4\}\) with probabilities \(\mathcal{P}=(1/10, 1/5, 4/10, 1/5, 1/10)\). 

Let $\Theta^i$ be a copy of $\Theta^i_k$ for a generic $k$. For any realization \( \theta:=(\theta^1,\dots,\theta^4) \) of \( \Theta:=(\Theta^1,\dots,\Theta^4) \), the game with reward functions \( r_{\theta}:=(r^1_{\theta^1},\dots,r^4_{\theta^4})\) is polymatrix and anonymous. Moreover, \( \mathbb{E}[r^i_{\Theta^i}] = r^i \) for each $i\in[4]$, where $(r^1,\dots,r^4)$ as specified by \eqref{eq:rps}, defines a zero-sum\footnote{A multi-player matrix game is zero-sum if it satisfies $\sum_{i=1}^N r^i(a^i, a^{-i}) = 0$ for every action profile $a$ \cite{donmez-sayin_ind-q-learning-for-poly-games, Fict_play_in_nets_Ewerhart_Valkanova}. 
} polymatrix game. Therefore, Theorem~\ref{thm:main} and Remark~\ref{rem:eps_greedy_in_zero_sum_and_pot} imply that, given any \( \epsilon > 0 \), two-timescale agg-FP converges to \( \mathbb{NE}_\epsilon(\mathcal{G}_{\mathcal{P}}) \) for sufficiently small exploration probability \( \delta \). Note that \( \mathbb{NE}(\mathcal{G}_{\mathcal{P}}) = \{(1/3,1/3,1/3,1/3),\dots,(1/3,1/3,1/3,1/3)\)\}, i.e. each agent selects each action with probability \( 1/3 \).

We simulate two-timescale agg-FP using this game and parameters \( \delta=0.1 \), \( \beta_k=(k+1)^{-0.6} \), and \( \alpha_k=(k+1)^{-0.7} \). In Fig.~\ref{fig:emp_freqs}, we present only the empirical action frequencies of agent 1, though other agents exhibit similar behavior. These trajectories confirm that \( (\hat{\gamma}_k)_{k\geq 0} \) approaches to a neighborhood of \( \mathbb{NE}(\mathcal{G}_{\mathcal{P}}) \).

Figs.~\ref{fig:q_errors} and~\ref{fig:ne_errors} compare two-timescale agg-FP with two-timescale FP and individual Q-learning\footnote{Two-timescale FP uses the same parameters as two-timescale agg-FP. Individual Q-learning shares the same Q-update rate, but uses $\delta$ as temperature. We also use the same perturbation realization $(\theta^i_k)_{i\in[4],k\geq 0}$ in each algorithm.}. Figure~\ref{fig:q_errors} shows the \( l_1 \)-error between \( \mathbb{E}[r_{\Theta}] \) and the Q-table from two-timescale FP, as well as the \( l_1 \)-error between \( \mathbb{E}[\bar{r}_{\Theta}] \) and the Q-table from two-timescale agg-FP. The Q-error from agg-FP converges faster, supporting the intuition that reducing the joint action space improves exploration. The Q-error from individual Q-learning is not shown, as the \( l_1 \)-distance between the Q-table and the expected reward functions is not well-defined in this case. Fig.~\ref{fig:ne_errors} displays the $l_1$-distance of empirical action frequencies to $\mathbb{NE}(\mathcal{G}_{\mathcal{P}})$, confirming that agg-FP converges faster. Reducing the Q-table size also appears to smoothen the learning process.

\begin{remark}
Complex games---with many agents, large action spaces, or multiple equilibria---may require more aggressive exploration (i.e., larger $\delta$) to ensure convergence in a timely manner. Conversely, in simpler environments, a smaller $\delta$ may suffice, yielding convergence to a better approximation of a Nash equilibrium of the game.
\qedwhite
\end{remark}

\begin{remark}
While using agg-FP instead of FP reduces the size of the joint action space and lowers memory requirements, it also leads to a loss of agent-specific information. In particular, by aggregating the agents’ actions, agg-FP discards which agent played which action. Nonetheless, as long as the game is polymatrix and anonymous, and the goal is to learn a Nash equilibrium (or an $\epsilon$-Nash equilibrium), this compression does not negatively impact performance.
\qedwhite
\end{remark}

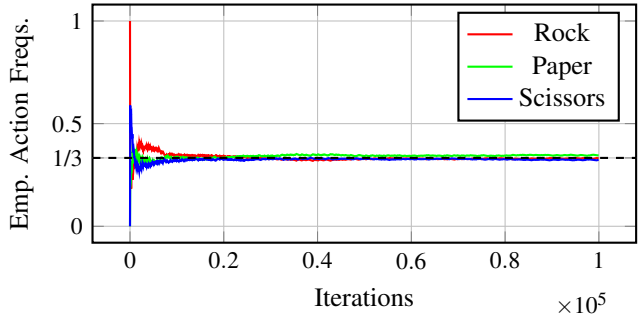
\begin{figure}[h!]
    \centering
    \begin{tikzpicture}
        \begin{axis}[
            xlabel={Iterations},
            ylabel={Emp. Action Freqs.},
            grid=major,
            extra y ticks={0.333},  
            extra y tick labels={1/3},  
            width=1.02\columnwidth,
            height=0.55*\columnwidth,
            legend pos=north east,
            xlabel near ticks, ylabel near ticks,
            tick label style={font=\small},
            enlargelimits=0.08,
            tick scale binop=\times,
            every axis/.append style={line width=0.8pt}
        ]
            \addplot[red, thick] table [col sep=comma, x=x, y=y] {Plot_Data/agg_act_freqs_hist_action_0.csv};
            \addlegendentry{Rock};
            \addplot[green, thick] table [col sep=comma, x=x, y=y] {Plot_Data/agg_act_freqs_hist_action_1.csv};
            \addlegendentry{Paper};
            \addplot[blue, thick] table [col sep=comma, x=x, y=y] {Plot_Data/agg_act_freqs_hist_action_2.csv};
            \addlegendentry{Scissors};
            \draw[dashed, black, thick] (axis cs:\pgfkeysvalueof{/pgfplots/xmin},0.333) -- (axis cs:\pgfkeysvalueof{/pgfplots/xmax},0.333);
        \end{axis}
    \end{tikzpicture}
    \caption{Empirical action frequencies of agent 1 from agg-FP.}
    \label{fig:emp_freqs}
\end{figure}

\begin{figure}[h!]
    \centering
    \begin{tikzpicture}
        \begin{axis}[
            xlabel={Iterations},
            ylabel={Q-Table Error},
            grid=major,
            width=1.02\columnwidth,
            height=0.55*\columnwidth,
            legend pos=north east,
            xlabel near ticks, ylabel near ticks,
            tick label style={font=\small},
            enlargelimits=0.05,
            tick scale binop=\times,
            every axis/.append style={line width=0.8pt}
        ]
            \addplot[blue, thick] table [col sep=comma, x=x, y=y] {Plot_Data/std_acc_Q_diffs.csv};
            \addlegendentry{FP};
            \addplot[red, thick] table [col sep=comma, x=x, y=y] {Plot_Data/agg_acc_Q_diffs.csv};
            \addlegendentry{agg-FP};
        \end{axis}
    \end{tikzpicture}
    \caption{Total Q-error across all agents and action profiles.}
    \label{fig:q_errors}
\end{figure}
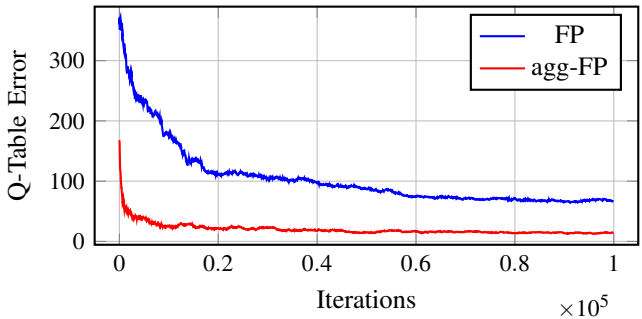

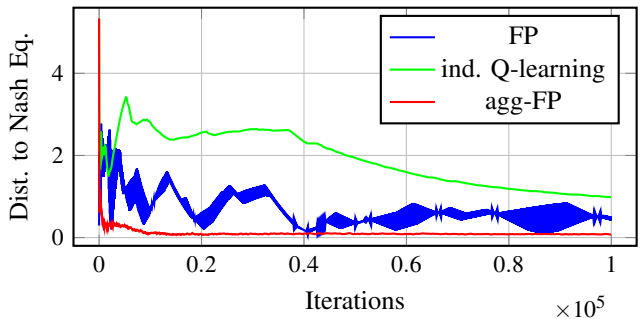
\begin{figure}[h!]
    \centering
    \begin{tikzpicture}
        \begin{axis}[
            xlabel={Iterations},
            ylabel={Dist. to Nash Eq.},
            grid=major,
            width=1.05\columnwidth,
            height=0.55*\columnwidth,
            legend pos=north east,
            xlabel near ticks, ylabel near ticks,
            tick label style={font=\small},
            enlargelimits=0.05,
            tick scale binop=\times,
            every axis/.append style={line width=0.8pt}
        ]
            \addplot[blue, thick] table [col sep=comma, x=x, y=y] {Plot_Data/std_acc_strat_diffs.csv};
            \addlegendentry{FP};
            \addplot[green, thick] table [col sep=comma, x=x, y=y] {Plot_Data/indQ_acc_strat_diffs.csv};
            \addlegendentry{ind. Q-learning};
            \addplot[red, thick] table [col sep=comma, x=x, y=y] {Plot_Data/agg_acc_strat_diffs.csv};
            \addlegendentry{agg-FP};
        \end{axis}
    \end{tikzpicture}
    \caption{Total $l_1$ distance to Nash equilibrium across all agents.}
    \label{fig:ne_errors}
\end{figure}

\section{Conclusion and Acknowledgment}

We have studied learning Nash equilibria in anonymous games through repeated play. We have introduced agg-FP and proved that it converges under the same conditions as FP for anonymous polymatrix games. We have then extended agg-FP to model-free settings and games with random payoffs by combining it with a reward estimation process, resulting in two-timescale agg-FP. For anonymous polymatrix games with random payoffs, we have proved that two-timescale agg-FP converges to a neighborhood of a Nash equilibrium whenever the \( \delta \)-greedy BR dynamics converge for the expected reward functions. We have presented simulation results demonstrating that two-timescale agg-FP accelerates convergence.

The authors acknowledge, with thanks, fruitful discussions with Kaiqing Zhang during different phases of the research reported here.

\bibliographystyle{IEEEtran}
\bibliography{Kara_Refs}

\end{document}